\newtheorem{Theorem}{Theorem}[section]
\newtheorem{Lemma}{Lemma}[section]
\newtheorem{Proposition}{Proposition}[section]
\newtheorem{Corollary}{Corollary}[section]
\newtheorem{Remark}{Remark}[section]
\numberwithin{equation}{section}
\title{\bf\large A Marked Cox Model for IBNR Claims: Model and Theory}
\author{\normalsize Andrei L. Badescu, ~~X. Sheldon Lin, ~~Dameng Tang\\
\emph{\normalsize Department of Statistical Sciences}\\
\emph{\normalsize University of Toronto}\\
\emph{\normalsize 100 St. George Street, Toronto}\\
\emph{\normalsize Ontario, Canada M5S 3G3}}
\date{}
\numberwithin{equation}{section}
\begin{document}
\maketitle \vspace*{0.5cm}

\begin{abstract}
Incurred but not reported (IBNR) loss reserving is an important issue for Property \& Casualty (P\&C) insurers. The modeling of the claim arrival process, especially its temporal dependence, has not been closely examined in many of the current
loss reserving models. 
In this paper, we propose modeling the claim arrival process together with its reporting delays as a marked Cox process. Our model is versatile in modeling temporal dependence, allowing also for natural interpretations. This paper focuses mainly on the theoretical aspects of the proposed model. We show that the associated reported claim process and
 IBNR claim process are both marked Cox processes with easily convertible intensity functions and marking distributions. The proposed model can also account for fluctuations in the exposure. By an order statistics property, we show that the corresponding discretely observed process preserves all the information about the claim arrival epochs. Finally, we derive closed-form expressions for both the autocorrelation function (ACF) and the distributions of the numbers of reported claims and IBNR claims. Model estimation and its applications are considered in a subsequent paper, \citet{Badescu2015b}.\\
\\
{\bf Keywords}: IBNR Claims; Loss Reserving; Cox Model; Hidden Markov Chain; Temporal Dependence; Pascal Mixture 
\end{abstract}

\newpage

\section{\large Introduction} \label{section: Introduction}


Loss reserving is fundamental for insurance companies because it affects various aspects of the business, such as ratemaking, solvency control and capital allocation. With the advent of Solvency II, insurers are required to not only provide a best estimate of their future liabilities, but also to have a better grasp of their uncertainty. For Property \& Casualty (P\&C) insurance companies, there are two important types of reserves, namely the incurred but not reported (IBNR) reserve and the reported but not settled (RBNS) reserve, which stem respectively from the potential time delays between the claim occurrence and its reporting time or between the reporting time and the settlement time. 
One potential benefit of separately estimating the RBNS reserve and the IBNR reserve is that the adequacy of the case reserves set by claim adjusters can be judged. This is especially important for actuaries to decide whether to include the case reserves for analysis or not (Friedland (2010), page 14).

Although predicting the number of IBNR claims has only been investigated in a relatively few papers (see \citet{Jewell1989}, \citet{Jewell1990}, \citet{Zhao2009}, \citet{ZhaoZhou2010}), the issue is of practical importance for several reasons. The number of IBNR claims can be used to calculate the claim frequency (the total number of reported and IBNR claims per exposure) and thus allows the incorporation of exposure information into loss reserving. In addition, estimating the claim frequency and the severity components separately makes it feasible to explicitly reflect inflation adjustments for the severity (Friedland (2010), page 205) and to stabilize the uncertainty in projecting the ultimate claim amounts, especially for the most recent accident years (Friedland (2010), page 212).

The existing practice to perform the loss reserving task is based on the so-called ``run-off triangle'' which sums up the claim data per combination of accident year and development year. Various deterministic algorithms, such as the chain-ladder (CL) method, the Bornhutter-Ferguson (BF) method and the frequency-severity method, can be chosen to apply to this triangular data (see Friedland (2010) from a practitioners' perspective). Meanwhile, there is a whole array of stochastic models, so called ``macro-level'' models, aiming to interpret these methods and to analyze the uncertainty of their results (\citet{WuthrichMerz2008}, \citet{WuthrichMerz2015}). Due to the limited number of data points contained in the triangular structure, these macro-level models tend to be over-parameterized and thus produce unstable estimates (\citet{Verdonck2009}). Furthermore, they cannot separately estimate the RBNS reserve and the IBNR reserve without some further granulation of the current data (\citet{Schinieper1991}, \citet{LiuVerrall2009}), or the inclusion of new data such as the numbers of reported claims (\citet{Verrall2010}). To resolve these two issues and other shortcomings, a class of ``micro-level'' models has emerged to use policy-level data to depict the development of individual claims. \citet{Norberg1993a}  proposed a marked nonhomogeneous Poisson model and a general mathematical framework for predicting IBNR claims and reserve calculation. The estimation of the model is considered in \citet{Norberg1993b}.
A case study based on Norberg's model using a liability portfolio is presented in \citet{AntonioPlat2014}. Several papers have demonstrated the advantages of this type of micro-level models
over the macro-level models through case studies or simulation experiments (e.g., \citet{JinFrees2013}, \citet{Huangetal2015}). It is noted that the separate estimation of the RBNS reserve and the IBNR reserve is natural in micro-level models because reporting and settlement delays are explicitly modeled.

Naturally, predicting the IBNR reserve requires the modeling of the claim arrival process (e.g., \citet{Jewell1989}, \citet{Norberg1993a}). In actuarial science, a popular model for this purpose is the nonhomogeneous Poisson process, which, to the best of our knowledge,
 has been used in all the micro-level loss reserving models. However, this aspect of  modeling loss reserving may be improved based on several considerations. A nonhomogeneous Poisson process might not be a reasonable approximation for the claim arrival process of a given portfolio when there exists dependence among the individual claim arrivals due to some environmental variations that affect the whole portfolio (\citet{Grandell1991}).
A Cox process is a more appropriate model in such situations (see Paragraph two of Section 2 for more details).
In addition, modeling the claim arrival process as a nonhomogeneous Poisson process implies independence among the numbers of claims from different accident years, which contradicts the calendar year effect exhibited in the run-off triangle (\citet{Holmherg1994}, \citet{Shietal2012}, \citet{WuthrichMerz2015}). Finally, \citet{Mikosch2009} held an empirical study of the nonhomogeneous Poisson model by using the arrival times in the well-publicized Danish fire insurance data. The data variations exhibited by the bursty arrivals are found to be more than those inherent in the nonhomogeneous Poisson process.

The afore-mentioned issues may be dealt with by incorporating a temporal dependence structure into a model for the claim arrival process. In the loss reserving context, this direction has only been taken up in a few papers recently. \citet{Shietal2012} and \citet{MerzWuthrich2013} respectively impose additive and multiplicative structure upon a Bayesian Gaussian copula to model both calendar year and accident year dependence. Although lacking flexibility in modeling dependence, their approaches lead to an analytic formula for the cumulative claim amounts.

In this paper, we propose to model the claim arrival process together with reporting delays as a marked Cox process. The intensity function of the process will be a piecewise stochastic process generated by a hidden Markov model (HMM) with Erlang state-dependent distributions. The proposed model allows for the fluctuation of the exposure over time. As a Cox process, the proposed model shares a similar interpretation as that of Markov-modulated Poisson process (MMPP). On the other hand, our model is different from MMPP because its underlying Markov process is discrete-time based and its piecewise intensity function consists of random variables instead of constants. Under our model assumption, the associated reported claim process and the IBNR claim process are both marked Cox processes with easily convertible intensity functions and marking distributions. We also derive an analytical formula for the number of reported claims and for the number of IBNR claims. Furthermore, the associated discretely observed process of the proposed model is a Pascal-HMM. Using an order statistics property for the proposed model, we show that
this discrete-time model preserves all the information about the claim arrival epochs. The joint distribution of the discretely observed process is a multivariate Pascal mixture, which is known to be extremely flexible in modeling dependence  (\citet{Badescu2015a}). A closed-form expression for the ACF of the discretely observed process is also obtained in steady state.

In a subsequent paper (\citet{Badescu2015b}), we  develop a fitting procedure to estimate all the model parameters including the number of states therein and the corresponding
transition probabilities. The efficiency of the fitting procedure is illustrated through simulation studies. We also fit the model to a real insurance data set, which obtains very good results.

The paper is structured as follows. In Section \ref{section: Proposed model and its properties} we describe the proposed model for the portfolio claim arrival process and discuss how it may be interpreted and justified. The associated reported claim process and IBNR claim process are presented and discussed
in Section \ref{section: Reported claim process and IBNR claim process}. The corresponding
discretely observed processes along with their desirable properties are presented in Section \ref{section: Properties}. An order statistics property of the model is obtained in Section \ref{section: An order statistics property} and is used to explain how the information from the claim arrival process is preserved. The distributions for the number of reported claims and the number of IBNR claims are derived in Section \ref{section: Predicting the number of IBNR claims: a closed-form expression}. Finally, we  provide some concluding remarks along with several directions for future research in Section \ref{conclusion}.

\section{\large  Model set-up and description} \label{section: Proposed model and its properties}

Following the notation of \citet{Norberg1993a}, suppose that the development of a claim until its reporting time is described as a pair of random variables $(T,U)$, where $T$ is the claim arrival epoch and $U$ is its reporting delay. In chronology of their arrival epochs, $\{(T_i,U_i),i=1,2,\cdots,\}$ constitute the claim arrival process of the portfolio along with reporting delays.
We denote the number of claims process by $\{N^a(t),t\ge 0\}$, where $N^a(t)$ is the number of claim up to time $t$. However, at a given valuation date $\tau$, we are only aware of a claim which has occurred so far only if it has been reported. Consequently, one cannot fully observe the claim arrival process up to time $\tau$: $\{N^a(t),0\le t \le\tau\}$. As a result, even though we specify the proposed model based on the claim arrival process, the model estimation is performed through the reported claim process, which will be analyzed in Section \ref{section: Reported claim process and IBNR claim process}.

As stated in \citet{Grandell1991} and also implicitly described in \citet{Norberg1993a}, a Cox process is a natural choice for modeling risk fluctuations exhibited in a portfolio claim arrival process. Depending on the line of business being considered, the stochastic intensity function can be interpreted as variations in an appropriate environment over time. For example, this environment may include weather conditions in automobile insurance. The environmental variation will affect every policy in the portfolio and all policies are independent conditional on the environmental variation. Since the sum of a large number of independent and sparse point processes is approximately a Poisson process (\citet{Grigelionis1963}), \citet{Grandell1991} argues that the portfolio claim arrival process, being unconditional on environmental variation, can be reasonably modeled as a Cox process.

We here propose to model $\{N^a(t),t\ge 0\}$ as a marked Cox process that is described through two components. First, the marks $\{U_1,U_2,\cdots\}$ are independent random variables with common density function $p_U(u)$ and cumulative distribution function $P_U(u)$. Second, the stochastic intensity function $\Lambda(t)$ is a piecewise stochastic process: $\Lambda(t)=\Lambda_l$, for $d_{l-1}\le t<d_l$, $l=1,2,\cdots$ and $d_0=0$. Here $d_l$, $l=1,2,\cdots,$ are pre-determined time points. In practice, data collection cannot be conducted in continuous time and these time points may thus
be interpreted as data collecting times. As a result, we have a continuous stochastic process with discrete observation times.
We assume that $\{\Lambda_1,\Lambda_2,\cdots\}$ is generated by an Erlang hidden Markov model (Erlang-HMM) with the following structure:
\begin{itemize}
  \item \emph{The hidden parameter process} $\{C_1,C_2,\cdots\}$ is a time-homogeneous Markov chain with a finite state space $\{1,2,\ldots,g\}$. Its initial distribution and transition probability matrix are respectively denoted by row vector $\boldsymbol{\pi}_1$ and matrix $\boldsymbol{\Gamma}=(\gamma_{ij})_{g\times g}$, where $\gamma_{ij}=P(C_l=j|C_{l-1}=i)$.
We assume that the Markov chain is irreducible, aperiodic and all the states are positive recurrent.
We denote the existing limiting distribution of the Markov chain by $\boldsymbol{\delta}$. The assumptions are
very natural from a modeling perspective: each state will be revisited infinitely many times over time; the time between two consecutive visits to the same state is irregular; and the mean time is finite.
	
\item \emph{The state-dependent process} $\{\Lambda_1,\Lambda_2,\cdots\}$ is defined such that each $\Lambda_l$ depends only on the current state $C_l$.
Given that $C_l=i$, we assume that $\Lambda_l$ follows an Erlang distribution with shape parameter $m_i$ and scale parameter $\omega_l\theta$ whose density function is given by
      \begin{equation}
        f_{\Lambda_l|C_l=i}(\lambda)=\frac{\lambda^{m_i-1}e^{-\frac{\lambda}{\omega_l\theta}}}{(\omega_l\theta)^{m_i}(m_i-1)!}\triangleq f(\lambda;m_i,\omega_l\theta),\label{state-dependent distr}
      \end{equation}
      where $\omega_l$ represents the risk exposure of the considered portfolio for the $l$th period.
\end{itemize}

If a Poisson process is used to model the portfolio claim arrival process, then the risk exposure is usually incorporated as a multiplicative factor into the intensity function (see e.g., \citet{Norberg1993a} and \citet{Grandell1991}). Due to the scaling property of the class of Erlang distributions, \eqref{state-dependent distr} is indeed a reasonable way to reflect the fluctuation of the risk exposure over time for the claim arrival process.

The mixed Poisson process and the Ammeter process (\citet{Ammeter1948}) are two commonly used classes of Cox processes, both of them being special cases of the proposed model. A mixed Poisson process is a Cox process with intensity process $\Lambda(t)\equiv \Lambda$, a single random variable. Our model reduces to a mixed Poisson process if $d_1=\infty$. An Ammeter process is a Cox process with $\Lambda(t)=\Lambda_l,~d_{l-1}\le t<d_l$, where $\{\Lambda_1,\Lambda_2,\cdots\}$ is a sequence of i.i.d. random variables. When the initial distribution $\boldsymbol{\pi}_1$ is the limiting distribution $\boldsymbol{\delta}$ and each row of $\boldsymbol{\Gamma}$ is $\boldsymbol{\delta}$, our model degenerates to an Ammeter process.

With the above model specifications, one can easily find the density function of each $\Lambda_l$.
\begin{Proposition}
  For the piecewise stochastic intensity function following the Erlang-HMM structure described above, $\Lambda_l$ is an Erlang mixture with density function
  \begin{equation}
    f_{\Lambda_l}(\lambda)=\sum_{i=1}^g\pi_{li}f(\lambda;m_i,\omega_l\theta),~l=1,2,\ldots,
  \end{equation}
  where $\pi_{li}=P(C_l=i)$ is the $i$th element of the row vector $\boldsymbol{\pi}_l=\boldsymbol{\pi}_1\boldsymbol{\Gamma}^{l-1}$.
  \label{Lambda}
\end{Proposition}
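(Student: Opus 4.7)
My plan is to obtain the density of $\Lambda_l$ by conditioning on the hidden state $C_l$ and applying the law of total probability, which is the natural route since the state-dependent distribution of $\Lambda_l$ given $C_l=i$ is precisely what the model specifies in \eqref{state-dependent distr}.

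First, I would write
\begin{equation*}
f_{\Lambda_l}(\lambda)=\sum_{i=1}^g P(C_l=i)\,f_{\Lambda_l\mid C_l=i}(\lambda)=\sum_{i=1}^g \pi_{li}\, f(\lambda;m_i,\omega_l\theta),
\end{equation*}
where the second equality substitutes the Erlang density from the model specification. The only nontrivial ingredient is identifying the marginal state probability $\pi_{li}=P(C_l=i)$ as the $i$th component of $\boldsymbol{\pi}_1\boldsymbol{\Gamma}^{l-1}$.

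Second, I would verify the formula $\boldsymbol{\pi}_l=\boldsymbol{\pi}_1\boldsymbol{\Gamma}^{l-1}$ by a short induction on $l$. The base case $l=1$ is the definition of the initial distribution. For the inductive step, the Chapman--Kolmogorov equations for a time-homogeneous Markov chain give
\begin{equation*}
P(C_{l+1}=j)=\sum_{i=1}^g P(C_l=i)\,P(C_{l+1}=j\mid C_l=i)=\sum_{i=1}^g \pi_{li}\,\gamma_{ij},
\end{equation*}
which is exactly the $j$th entry of $\boldsymbol{\pi}_l\boldsymbol{\Gamma}=\boldsymbol{\pi}_1\boldsymbol{\Gamma}^{l}$, completing the induction.

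There is no real obstacle here: the result is essentially a restatement of the definition of a hidden Markov model combined with the marginal evolution of the underlying Markov chain. The only subtle point worth flagging is that the conditional density $f_{\Lambda_l\mid C_l=i}$ is well-defined because, by the HMM assumption, $\Lambda_l$ depends on the past only through $C_l$; in particular, conditioning on $C_l=i$ alone gives the Erlang density in \eqref{state-dependent distr} without needing to integrate out the rest of the chain's history. This observation justifies using only the marginal probabilities $\pi_{li}$ in the mixture, rather than the full joint law of $\{C_1,\dots,C_l\}$.
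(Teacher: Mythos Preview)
Your proposal is correct and follows essentially the same approach as the paper, which simply states that the result follows immediately from the law of total probability and the Chapman--Kolmogorov equation. Your write-up merely makes explicit the two ingredients the paper cites, including the induction for $\boldsymbol{\pi}_l=\boldsymbol{\pi}_1\boldsymbol{\Gamma}^{l-1}$ and the HMM conditional-independence remark.
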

\begin{proof}
  It follows immediately from
the law of total probability and the Chapman-Kolmogorov equation.
\end{proof}

\begin{Remark}
  Since the class of Erlang mixtures is dense in the space of positive continuous distributions (\citet{Tijms1994}), one might think of extending the state-dependent density in \eqref{state-dependent distr} to that of mixed Erlang. However, the resulting $\Lambda_l$ is still Erlang mixture distributed (\citet{kpw13}), which implies that $\eqref{state-dependent distr}$ is sufficient for constructing a flexible model of the piecewise stochastic intensity function.
\end{Remark}

\section{\large Reported claim process and IBNR claim process} \label{section: Reported claim process and IBNR claim process}

As stated in Section \ref{section: Proposed model and its properties}, the proposed model can only be estimated through its associated reported claim process. The IBNR claim process plays a key role in predicting the number of IBNR claims. In this section, we show that both processes are still marked Cox processes and we identify their intensity functions. Although both the reported claim process and the IBNR claim process should be indexed with their corresponding valuation date $\tau$, for notational simplicity we drop this index hereafter.

We denote the reported claim process with respect to valuation date $\tau$ as $\{N^r(t),\allowbreak0\le t\le\tau\}$. This process comprises of those marked points from $\{(T_i,U_i),~i=1,2,\ldots\}$ which satisfies the condition $T_i+U_i\le\tau$. The total number of reported claims is then $N^r(\tau)$. When ordered in chronology of their arrival epochs, the selected marked points are denoted by $\{(T_i^r,U_i^r),~i=1,2,\cdots, N^r(\tau)\}$. In a similar way, we can define the IBNR claim process with respect to valuation date $\tau$, which is denoted by $\{N^{IBNR}(t),0\le t\le\tau\}$ and consists of marked points $\{(T_i^{IBNR},U_i^{IBNR}),i=1,2,\cdots,N^{IBNR}(\tau)\}$, where $N^{IBNR}(\tau)$ is the total number of IBNR claims. It is noted that $\{N^r(t), 0\le t\le\tau\}$ is observable while $\{N^{IBNR}(t), 0\le t\le\tau\}$ is not.

In the following, we first prove that the marked Cox processes are closed under thinning. More discussions about the thinning operation of point processes can be found in \citet{Grandell1997}. For this, we will mainly use the tool of Laplace functional transform (LFT). For a point process $N=\{X_i,~i=1,2,\ldots\}$, its LFT is given by
\begin{equation*}
  L_N(f(x))=E(e^{-\sum_if(X_i)}),
\end{equation*}
where $f(x)$ is a nonnegative function of the point process. Just like the fact that the Laplace transforms have a one-to-one correspondence with random variables, the mapping between LFTs and point processes is also one-to-one. For more properties about LFT, see \citet{Mikosch2009}.

\begin{Theorem}
  Assume that $\bar{N}$ is a marked Cox process on $[0,\infty)$ with a marking space of $R^d$. Its intensity function is $\Lambda(t)$ and its marks $\boldsymbol{Z}_i,~i=1,2,\ldots,$ are independent but position-dependent with density function $p_{\boldsymbol{Z}|t}(\boldsymbol{z})$. Now consider the following thinning probabilities:
\begin{equation}
  p(t,\boldsymbol{z})=\left\{
  \begin{array}{rl}
  1 & \text{if } (t,\boldsymbol{z})\in D,\\
  0 & \text{if } (t,\boldsymbol{z})\notin D,
  \end{array} \right.
  \label{thinning prob}
\end{equation}
where $D$ is a subset of $[0,\infty)\times R^d$. Then the resulting thinned point process $\bar{N}_p$ is still a marked Cox process with intensity function $\Lambda(t)P(Z\in D_t)I_{\{t\in T_D\}}$ and independent yet position-dependent marks having density function $\frac{p_{\boldsymbol{Z}|t}(\boldsymbol{z})}{P(\boldsymbol{Z}\in D_t)}I_{\{\boldsymbol{z}\in D_t\}}$, where $D_t=\{\boldsymbol{z}\in R^d|(t,\boldsymbol{z})\in D\}$ and $T_D=\{t\in[0,\infty)|\exists \boldsymbol{z}~s.t.~(t,\boldsymbol{z})\in D\}$.~
\label{thinning of Cox process}
\end{Theorem}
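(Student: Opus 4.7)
The plan is to prove this via the Laplace functional transform (LFT) characterization of point processes mentioned just before the theorem: since the LFT uniquely determines the distribution of a point process, it suffices to compute $L_{\bar{N}_p}(f)$ for an arbitrary nonnegative test function $f(t,\boldsymbol{z})$ and recognize it as the LFT of the proposed marked Cox process.

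The first step is to re-express the thinned LFT in terms of the original process. Because thinning with the indicator $p(t,\boldsymbol{z})=\mathbb{1}_D(t,\boldsymbol{z})$ simply discards points outside $D$, I would write
\begin{equation*}
  L_{\bar{N}_p}\bigl(f(t,\boldsymbol{z})\bigr)
  =E\!\left[\exp\!\left(-\sum_i f(T_i,\boldsymbol{Z}_i)\,\mathbb{1}_D(T_i,\boldsymbol{Z}_i)\right)\right]
  =L_{\bar{N}}\bigl(f(t,\boldsymbol{z})\,\mathbb{1}_D(t,\boldsymbol{z})\bigr).
\end{equation*}
So the whole task reduces to evaluating the original LFT at the modified test function $\tilde f=f\cdot\mathbb{1}_D$.

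The second step is to apply the standard formula for the LFT of a marked Cox process, conditioning on a realization of the intensity $\Lambda(\cdot)$. Conditional on $\Lambda$, $\bar{N}$ is an inhomogeneous marked Poisson process on $[0,\infty)\times\mathbb{R}^d$ with intensity measure $\Lambda(t)\,p_{\boldsymbol{Z}|t}(\boldsymbol{z})\,\mathrm{d}t\,\mathrm{d}\boldsymbol{z}$, so after taking expectation over $\Lambda$,
\begin{equation*}
  L_{\bar{N}}(\tilde f)
  =E\!\left[\exp\!\left(-\int_0^\infty\!\int_{\mathbb{R}^d}\bigl(1-e^{-\tilde f(t,\boldsymbol{z})}\bigr)\,p_{\boldsymbol{Z}|t}(\boldsymbol{z})\,\mathrm{d}\boldsymbol{z}\,\Lambda(t)\,\mathrm{d}t\right)\right].
\end{equation*}
The integrand vanishes off $D$ (because $\tilde f=0$ there, so $1-e^{-\tilde f}=0$), which collapses the inner integral to the $D_t$-slice. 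Pulling out $P(\boldsymbol{Z}\in D_t)$ and renormalizing the mark density inside the slice to $p_{\boldsymbol{Z}|t}(\boldsymbol{z})/P(\boldsymbol{Z}\in D_t)\cdot\mathbb{1}_{D_t}(\boldsymbol{z})$ converts the expression into exactly the LFT of a marked Cox process with intensity $\Lambda(t)\,P(\boldsymbol{Z}\in D_t)\,\mathbb{1}_{\{t\in T_D\}}$ and the stated position-dependent mark density. Uniqueness of the LFT then closes the argument.

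The main obstacle is the bookkeeping around the slice $D_t$ and the set $T_D$: one must check that the points $t$ where $P(\boldsymbol{Z}\in D_t)=0$ (equivalently, $t\notin T_D$ up to null sets) contribute zero to the outer integral, so that inserting the factor $\mathbb{1}_{\{t\in T_D\}}$ into the thinned intensity is consistent and that the renormalized mark density is well-defined on the support. Beyond that, the argument is a careful but routine manipulation of the exponential integrand; the only substantive probabilistic input is the LFT formula for a marked Cox process and the observation that thinning by an indicator corresponds to multiplying the test function by that same indicator.
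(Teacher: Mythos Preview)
Your proposal is correct and follows essentially the same LFT-based route as the paper: compute the Laplace functional of the thinned process, plug in the marked Cox LFT formula, collapse the integrand to the $D_t$-slice, and renormalize. The only cosmetic difference is that you justify the first step directly via $L_{\bar{N}_p}(f)=L_{\bar{N}}(f\cdot\mathbb{1}_D)$, whereas the paper invokes the general thinning identity $L_{\bar{N}_p}(f)=L_{\bar{N}}\bigl(-\log(1-p+pe^{-f})\bigr)$ from \citet{Karr1991} and then specializes to $p=\mathbb{1}_D$; these coincide since $-\log(1-\mathbb{1}_D+\mathbb{1}_D e^{-f})=f\cdot\mathbb{1}_D$.
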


\begin{proof}
  By Equation (1.33) in \citet{Karr1991}, for any nonnegative function $f(t,\boldsymbol{z})$ on $[0,\infty)\times R^d$, the LFT of $\bar{N}_p$ equals $L_{\bar{N}}\left(-\log\left(1-p(t,\boldsymbol{z})+p(t,\boldsymbol{z})e^{-f(t,\boldsymbol{z})}\right)\right)$. Combining the results in Examples 1.16 and  1.28 in \citet{Karr1991}, this can be calculated as $E\left(e^{-\int_0^\infty\left(1-\int_{R^d}(1-p(t,\boldsymbol{z})+p(t,\boldsymbol{z})e^{-f(t,\boldsymbol{z})})p_{\boldsymbol{Z}|t}(\boldsymbol{z})d\boldsymbol{z}\right)\Lambda(t)dt}\right)$.

  Since the thinning probabilities are given in \eqref{thinning prob}, the previous equation can be further simplified as follows:
\begin{align*}
  & E\left(e^{-\int_{T_D}\left(1-\int_{D_t}\left(1-p(t,\boldsymbol{z})+p(t,\boldsymbol{z})e^{-f(t,\boldsymbol{z})}\right)p_{\boldsymbol{Z}|t}(\boldsymbol{z})d\boldsymbol{z}-
  \int_{D_t^c}\left(1-p(t,\boldsymbol{z})+p(t,\boldsymbol{z})e^{-f(t,\boldsymbol{z})}\right)p_{\boldsymbol{Z}|t}(\boldsymbol{z})d\boldsymbol{z}\right)\Lambda(t)dt}\right)\\
  =& E\left(e^{-\int_{T_D}\left(1-\int_{D_t}e^{-f(t,\boldsymbol{z})}p_{\boldsymbol{Z}|t}(\boldsymbol{z})d\boldsymbol{z}-\int_{D_t^c}p_{\boldsymbol{Z}|t}(\boldsymbol{z})d\boldsymbol{z}\right)\Lambda(t)dt}\right)\\
  =& E\left(e^{-\int_{T_D}\left(\int_{D_t}p_{\boldsymbol{Z}|t}(\boldsymbol{z})d\boldsymbol{z}-\int_{D_t}e^{-f(t,\boldsymbol{z})}p_{\boldsymbol{Z}|t}(\boldsymbol{z})d\boldsymbol{z}\right)\Lambda(t)dt}\right)\\
  =& E\left(e^{-\int_{T_D}\left(P(\boldsymbol{Z}\in D_t)-\int_{D_t}e^{-f(t,\boldsymbol{z})}\frac{p_{\boldsymbol{Z}|t}(\boldsymbol{z})}{P(\boldsymbol{Z}\in D_t)}d\boldsymbol{z}P(\boldsymbol{Z}\in D_t)\right)\Lambda(t)dt}\right)\\
  =& E\left(e^{-\int_{T_D}\left(1-\int_{D_t}e^{-f(t,\boldsymbol{z})}\frac{p_{\boldsymbol{Z}|t}(\boldsymbol{z})}{P(\boldsymbol{Z}\in D_t)}d\boldsymbol{z}\right)\Lambda(t)P(\boldsymbol{Z}\in D_t)dt}\right)\\
  =& E\left(e^{-\int_0^\infty\left(1-\int_{R^d}e^{-f(t,\boldsymbol{z})}\frac{p_{\boldsymbol{Z}|t}(\boldsymbol{z})}{P(\boldsymbol{Z}\in D_t)}I_{\{\boldsymbol{z}\in D_t\}}d\boldsymbol{z}\right)\Lambda(t)P(\boldsymbol{Z}\in D_t)I_{\{t\in T_D\}}dt}\right).
\end{align*}
Again it follows from Examples 1.16 and  1.28  in \citet{Karr1991} that $\bar{N}_p$ is a marked Cox process with intensity function $\Lambda(t)P(Z\in D_t)I_{\{t\in T_D\}}$ and position-dependent marks having density functions $\frac{p_{Z|t}(z)}{P(Z\in D_t)}I_{\{z\in D_t\}}$.
\end{proof}

We now show that if the claim arrival process $\{N^a(t),t\ge0\}$ is a marked Cox process, then both the reported claim process $\{N^r(t),0\le t\le\tau\}$ and IBNR claim process $\{N^{IBNR}(t),0\le t\le\tau\}$ are still marked Cox processes with easily convertible stochastic intensity functions and mark densities.

\begin{Theorem}
  Assume that the claim arrival process $\{N^a(t),t\ge0\}$ is a marked Cox process with stochastic intensity function $\Lambda(t)$ and independent marks $\{U_i,i=1,2,\ldots\}$ following common density function $p_{U}(u)$. Then for a given valuation date $\tau$, its associated reported claim process $\{N^r(t),0\le t\le\tau\}$ and IBNR claim process $\{N^{IBNR}(t),0\le t\le\tau\}$ are also marked Cox processes. Their adjusted stochastic intensity functions are $\Lambda^r(t)=\Lambda(t)P_U(\tau-t)I_{\{0\le t\le\tau\}}$ and $\Lambda^{IBNR}(t)=\Lambda(t)(1-P_U(\tau-t))I_{\{0\le t\le\tau\}}$, respectively, and their independent marks follow adjusted position-dependent mark density functions $p^r_{U|t}(u)=\frac{p_U(u)}{P_U(\tau-t)}I_{\{0\le u\le\tau-t\}}$ and $p^{IBNR}_{U|t}(u)=\frac{p_U(u)}{1-P_U(\tau-t)}I_{\{u\ge\tau-t\}}$, respectively.
  \label{thinning_reported_IBNR}
\end{Theorem}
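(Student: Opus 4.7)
The plan is to apply Theorem \ref{thinning of Cox process} directly, twice, with two different choices of the thinning set $D$ corresponding respectively to ``reported by time $\tau$'' and ``not yet reported by time $\tau$.'' The original marked claim arrival process satisfies the hypothesis of Theorem \ref{thinning of Cox process} with $d=1$, marking space $[0,\infty)$, stochastic intensity $\Lambda(t)$, and position-independent mark density $p_{U|t}(u)=p_U(u)$. So both $\{N^r(t)\}$ and $\{N^{IBNR}(t)\}$ will fall out as thinnings of $\{N^a(t)\}$.

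For the reported claim process, I would observe that a marked point $(T_i,U_i)$ is retained precisely when $T_i+U_i\le\tau$, so the corresponding thinning probability is the indicator of
\begin{equation*}
D^r=\{(t,u)\in[0,\infty)\times[0,\infty):t+u\le\tau\},
\end{equation*}
which has exactly the form \eqref{thinning prob}. Its projections are $T_{D^r}=[0,\tau]$ and, for each $t\in T_{D^r}$, $D^r_t=[0,\tau-t]$, giving $P(U\in D^r_t)=P_U(\tau-t)$. Plugging these into the conclusion of Theorem \ref{thinning of Cox process} yields the stochastic intensity $\Lambda^r(t)=\Lambda(t)P_U(\tau-t)I_{\{0\le t\le\tau\}}$ and the position-dependent mark density $p^r_{U|t}(u)=\frac{p_U(u)}{P_U(\tau-t)}I_{\{0\le u\le\tau-t\}}$, as claimed.

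For the IBNR claim process, I would take the complementary set (restricted to arrival times before $\tau$),
\begin{equation*}
D^{IBNR}=\{(t,u)\in[0,\infty)\times[0,\infty):0\le t\le\tau,\ u>\tau-t\},
\end{equation*}
whose indicator again has the form \eqref{thinning prob}. The projections are $T_{D^{IBNR}}=[0,\tau]$ and $D^{IBNR}_t=(\tau-t,\infty)$, so $P(U\in D^{IBNR}_t)=1-P_U(\tau-t)$. Theorem \ref{thinning of Cox process} then immediately gives $\Lambda^{IBNR}(t)=\Lambda(t)(1-P_U(\tau-t))I_{\{0\le t\le\tau\}}$ and the truncated mark density $p^{IBNR}_{U|t}(u)=\frac{p_U(u)}{1-P_U(\tau-t)}I_{\{u>\tau-t\}}$.

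There is essentially no hard step: the entire content is the correct identification of $D$, $T_D$ and $D_t$ in each of the two cases, after which Theorem \ref{thinning of Cox process} applies verbatim. If anything deserves care, it is the bookkeeping of the indicator functions on the boundary $\{u=\tau-t\}$ (measure zero, hence irrelevant for the intensity and the density), and the observation that the reported and IBNR thinning sets are disjoint and together cover $[0,\tau]\times[0,\infty)$, so the two thinned processes constitute a decomposition of $\{N^a(t),0\le t\le\tau\}$ into a marked Cox process that is fully observed and one that is completely latent.
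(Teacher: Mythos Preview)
Your proposal is correct and follows essentially the same route as the paper: both proofs simply identify the thinning sets $D^r=\{(t,u):t+u\le\tau\}$ and $D^{IBNR}=\{(t,u):0\le t\le\tau,\ t+u>\tau\}$, compute the corresponding sections $T_D$ and $D_t$, and invoke Theorem~\ref{thinning of Cox process}. Your write-up is slightly more explicit in spelling out $T_D$, $D_t$, and the measure-zero boundary issue, but the argument is the same.
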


\begin{proof}
  For a given valuation date $\tau$, the reported claim process $\{N^r(t),0\le t\le\tau\}$ is a thinned point process of $\{N^a(t),t\ge0\}$ by thinning probabilities
  \begin{equation*}
  p(t,u)=\left\{
  \begin{array}{rl}
  1 & \text{if } t+u\le\tau ,\\
  0 & \text{if } t+u>\tau.
  \end{array} \right.
\end{equation*}
By Theorem \ref{thinning of Cox process}, $\{N^r(t),0\le t\le\tau\}$ is again a marked Cox process, whose adjusted stochastic intensity and mark density are as given. For the IBNR claim process $\{N^{IBNR}(t),0\le t\le\tau\}$, the argument is similar except that one chooses $D=\{(t,u)|t+u>\tau\}$.
\end{proof}

Theorem \ref{thinning_reported_IBNR} shows that not only are
both the reported claim process and the IBNR claim process marked Cox processes, but also that their intensity processes and mark densities have intuitive interpretations. For example, the intensity for the reported claim process $\Lambda^r(t)$ is the original intensity $\Lambda(t)$ times the probability that the claim has been reported by time $\tau$. The mark density $p_{U|t}^r(u)$ is the original density $p_U(u)$ conditional on the fact that the claim has been reported by time $\tau$.

\section{\large The corresponding discretely observed processes} \label{section: Properties}

Data is usually aggregated in some discrete form before analysis in practice.
As described in Section \ref{section: Proposed model and its properties}, $d_l; l=0,1,\cdots,$ are interpreted as data collecting times. Hence, the number of observable
claims are $\{N_1,N_2,\cdots\}$, where $N_l$ is the number of claims that arrived during $[d_{l-1},d_l)$, no matter reported or not.
The argument can be extended to the reported claims and
the IBNR claims. In this section we characterize the three
discretely observed processes, which not only are important in their own right, but also play a critical role
in fitting the model to data.

We begin with  the discretely observed claim process.
The following proposition shows that under the model assumptions, this discretely observed process follows a Pascal-HMM.

\begin{Proposition}
  With the proposed Cox process $\{N^a(t),t\ge0\}$, its discretely observed process $\{N_1,N_2,\ldots\}$ follows a Pascal-HMM. Its hidden parameter process is given by $\{C_1,C_2,\ldots\}$. For $l=1,2,\ldots;~i=1,2,\ldots,g$, its state-dependent distribution is Pascal with
  \begin{equation}
    P(N_l=n|C_l=i)=p\left(n;m_i,(d_l-d_{l-1})\omega_l\theta\right),
  \end{equation}
  where
  \begin{equation}
    p(n;m,\theta)=\dbinom{n+m-1}{m-1}\left(\frac{1}{1+\theta}\right)^{m}
    \left(\frac{\theta}{1+\theta}\right)^n. \qedhere
  \end{equation}
\label{Pascal-HMM}
\end{Proposition}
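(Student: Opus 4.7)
The plan is to verify the two defining properties of a Pascal-HMM in turn: (i) the state-dependent distribution of $N_l$ given $C_l=i$ is the stated Pascal, and (ii) conditional on the hidden chain $\{C_1,C_2,\ldots\}$, the observed counts $\{N_1,N_2,\ldots\}$ are mutually independent, with $N_l$ depending only on $C_l$.

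For (i), I would condition first on the entire intensity sequence $\{\Lambda_1,\Lambda_2,\ldots\}$. Because $\Lambda(t)=\Lambda_l$ is constant on $[d_{l-1},d_l)$, a standard property of the Cox process gives that, conditional on $\Lambda_l$, the count $N_l$ is Poisson with mean $(d_l-d_{l-1})\Lambda_l$. Now condition on $C_l=i$: by \eqref{state-dependent distr}, $\Lambda_l$ is Erlang$(m_i,\omega_l\theta)$. The marginal of $N_l$ given $C_l=i$ is therefore the Poisson–Erlang mixture
\begin{equation*}
P(N_l=n\mid C_l=i)=\int_0^\infty \frac{[(d_l-d_{l-1})\lambda]^n e^{-(d_l-d_{l-1})\lambda}}{n!}\,f(\lambda;m_i,\omega_l\theta)\,\dd\lambda.
\end{equation*}
A direct evaluation using the gamma integral $\int_0^\infty \lambda^{n+m_i-1}e^{-\lambda(d_l-d_{l-1}+1/(\omega_l\theta))}\dd\lambda=(n+m_i-1)!\,(d_l-d_{l-1}+1/(\omega_l\theta))^{-(n+m_i)}$ collapses the expression to $p(n;m_i,(d_l-d_{l-1})\omega_l\theta)$. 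This is the only real calculation and is routine.

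For (ii), I would argue in two layers. Conditional on the full intensity sequence $\{\Lambda_l\}$, the counts $\{N_l\}$ are independent because they are counts of a Poisson process on disjoint intervals with a given (piecewise constant) intensity path; moreover $N_l$ depends only on $\Lambda_l$ (and the length $d_l-d_{l-1}$). The Erlang-HMM specification states that, conditional on $\{C_l\}$, the variables $\{\Lambda_l\}$ are independent and each $\Lambda_l$ depends only on $C_l$. Combining these two conditional-independence statements (integrating out $\{\Lambda_l\}$) yields that given $\{C_l\}$ the $N_l$'s are independent with $N_l$ depending only on $C_l$ — precisely the HMM structure. Together with (i), this establishes that $\{N_l\}$ is a Pascal-HMM driven by $\{C_l\}$.

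The main obstacle, such as it is, lies in cleanly handling the two nested conditionings so that nothing is assumed beyond what the model actually provides; the integration in (i) is mechanical, and the conditional independence in (ii) follows directly from the Erlang-HMM definition together with the Poisson property of the Cox process on a fixed intensity path. No additional machinery beyond what has been introduced in Sections 2 is required.
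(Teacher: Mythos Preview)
Your proposal is correct and follows essentially the same approach as the paper: the paper likewise conditions on $\Lambda_l$ and integrates the Poisson likelihood against the Erlang density $f(\lambda;m_i,\omega_l\theta)$ to obtain the Pascal form, and for part (ii) simply remarks that the hidden parameter process is ``obviously'' kept the same. Your treatment of the conditional-independence structure in (ii) is in fact more careful than the paper's, but the argument and key calculation are identical.
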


\begin{proof}
  Obviously the hidden parameter process is kept the same and we only need to check that the state-dependent distributions are Pascal. Using the law of total probability, we obtain
  \begin{align*}
    P(N_l=n|C_l=i)=& \int_0^\infty P(N_l=n|\Lambda_l=\lambda,C_l=i)f_{\Lambda_l|C_l=i}(\lambda)d\lambda\\
    =& \int_0^\infty \frac{\left((d_l-d_{l-1})\lambda\right)^ne^{-(d_l-d_{l-1})\lambda}}{n!}f(\lambda;m_i,\omega_l\theta)d\lambda\\
    =& p(n;m_i,(d_l-d_{l-1})\omega_l\theta).
  \end{align*}
  Thus $\{N_1,N_2,\ldots\}$ follows the given Pascal-HMM.
\end{proof}

Without loss of generality, we assume that $\tau=d_k$. We denote the discretely observed reported claim process as $\{N_1^r,\ldots,N_k^r\}$, where $N_l^r$ is the number of claims arrived during $[d_{l-1},d_l)$ and reported by time $\tau$. Similarly, $\{N_1^{IBNR},\ldots,\allowbreak N_k^{IBNR}\}$ denotes the discretely observed IBNR claim process, where $N_l^{IBNR}$ is the number of the claims arrived during $[d_{l-1},d_l)$ but have not yet reported yet at time $\tau$. Using similar arguments in the derivations
of Proposition \ref{Pascal-HMM} and Theorem \ref{thinning_reported_IBNR}, it is straightforward to show that both discretely observed processes are also
Pascal-HMMs.

\begin{Corollary}
  For the proposed claim arrival process, the discretely observed processes of the reported claim process and the IBNR claim process are both Pascal-HMMs. They share the same hidden parameter process $\{C_1,\ldots,C_k\}$ and their state-dependent distributions are
\begin{equation*}
  P(N_l^r=n|C_l=i)=p\left(n;m_i,\left(\int_{d_{l-1}}^{d_l}P_U(\tau-t)dt\right)\omega_l\theta\right)
\end{equation*}
and
\begin{equation*}
  P(N_l^{IBNR}=n|C_l=i)=p\left(n;m_i,\left(\int_{d_{l-1}}^{d_l}\left(1-P_U(\tau-t)\right)dt\right)\omega_l\theta\right),
\end{equation*}
where $l=1,2,\cdots,k; i=1,2,\cdots,g$.
\label{reported and IBNR process}
\end{Corollary}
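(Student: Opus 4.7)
The plan is to combine Theorem \ref{thinning_reported_IBNR} with the same integration argument used in Proposition \ref{Pascal-HMM}. First I would invoke Theorem \ref{thinning_reported_IBNR} to represent the continuous reported claim process as a marked Cox process on $[0,\tau]$ with intensity $\Lambda^r(t)=\Lambda(t)P_U(\tau-t)I_{\{0\le t\le\tau\}}$, and analogously the IBNR process with intensity $\Lambda^{IBNR}(t)=\Lambda(t)(1-P_U(\tau-t))I_{\{0\le t\le\tau\}}$. Because the underlying $\Lambda(t)$ is piecewise with $\Lambda(t)=\Lambda_l$ on $[d_{l-1},d_l)$ and $\Lambda_l$ depends only on $C_l$, the thinned intensities inherit exactly the same block structure, so that on $[d_{l-1},d_l)$ the intensity $\Lambda^r(t)$ is a deterministic function of $t$ times the single random variable $\Lambda_l$.

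Next I would verify that the hidden Markov structure is preserved. Given the entire intensity path, the counts $N_l^r$ in disjoint intervals $[d_{l-1},d_l)$ are independent Poisson random variables; since the path over $[d_{l-1},d_l)$ is driven by $\Lambda_l$ alone, conditioning further on $\{C_1,\dots,C_k\}$ yields that $N_1^r,\dots,N_k^r$ are conditionally independent with each $N_l^r$ depending only on $C_l$ (through $\Lambda_l$). This is exactly the Pascal-HMM structure, with the same hidden chain $\{C_1,\dots,C_k\}$ as in the original model.

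The remaining step is to identify the state-dependent distribution of $N_l^r$ given $C_l=i$. I would mirror the computation in Proposition \ref{Pascal-HMM}: condition first on $\Lambda_l=\lambda$, noting that $N_l^r\mid \Lambda_l=\lambda,C_l=i$ is Poisson with mean
\begin{equation*}
\int_{d_{l-1}}^{d_l}\lambda P_U(\tau-t)\,dt=\lambda\int_{d_{l-1}}^{d_l}P_U(\tau-t)\,dt,
\end{equation*}
and then integrate against the Erlang density $f(\lambda;m_i,\omega_l\theta)$. The resulting Poisson-Erlang mixture is a Pascal distribution with shape $m_i$ and scale parameter $\bigl(\int_{d_{l-1}}^{d_l}P_U(\tau-t)\,dt\bigr)\omega_l\theta$, which is precisely the stated form. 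The IBNR case is identical after replacing $P_U(\tau-t)$ with $1-P_U(\tau-t)$.

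The computation is essentially routine given the two earlier results; the only subtlety worth checking carefully is that the thinning in Theorem \ref{thinning_reported_IBNR} is compatible with the block-constant structure of $\Lambda(t)$, so that the HMM is preserved when one aggregates into the intervals $[d_{l-1},d_l)$. Once that is noted, the Poisson-Erlang integral collapses into the Pascal probability mass function exactly as in Proposition \ref{Pascal-HMM}, and the corollary follows.
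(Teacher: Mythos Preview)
Your proposal is correct and follows essentially the same approach the paper indicates: the paper does not spell out a proof but simply states that the corollary follows ``using similar arguments in the derivations of Proposition~\ref{Pascal-HMM} and Theorem~\ref{thinning_reported_IBNR},'' which is precisely the combination you carry out in detail.
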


As direct consequences, the univariate and multivariate marginal distributions of $\{N_1,N_2,\cdots\}$ can be described below. We denote the $k$-step transition probability matrix by $\boldsymbol{\Gamma}^k=\left(\gamma_{ij}(k)\right)_{g\times g}$, where $\gamma_{ij}(k)=P(C_{l+k}=j|C_l=i)$.

\begin{Corollary}
  For $l=1,2,\cdots$,~$N_l$ follows a mixed Pascal distribution with probability function
  \begin{equation}
    P(N_l=n)=\sum_{i=1}^g\pi_{li}p\left(n;m_i,(d_l-d_{l-1})\omega_l\theta\right).
  \end{equation}
  \label{marginal distribution}
\end{Corollary}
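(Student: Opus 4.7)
The plan is to obtain the marginal distribution by unconditioning on the hidden state $C_l$ via the law of total probability. All of the required ingredients are already in place: Proposition~\ref{Pascal-HMM} gives the state-dependent distribution $P(N_l=n\mid C_l=i)=p\!\left(n;m_i,(d_l-d_{l-1})\omega_l\theta\right)$, and Proposition~\ref{Lambda} (together with the Chapman--Kolmogorov equation) identifies the marginal of $C_l$ as $P(C_l=i)=\pi_{li}$, the $i$th entry of $\boldsymbol{\pi}_l=\boldsymbol{\pi}_1\boldsymbol{\Gamma}^{l-1}$.

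First I would write
\begin{equation*}
  P(N_l=n)=\sum_{i=1}^g P(N_l=n\mid C_l=i)\,P(C_l=i),
\end{equation*}
which is valid because the state space $\{1,2,\ldots,g\}$ is finite and the events $\{C_l=i\}$ partition the sample space. Then I would substitute the Pascal state-dependent distribution from Proposition~\ref{Pascal-HMM} into the first factor and $\pi_{li}$ into the second, yielding the claimed formula immediately. The representation as a finite mixture of Pascal distributions with mixing weights $\{\pi_{li}\}_{i=1}^g$ and common ``scale'' $(d_l-d_{l-1})\omega_l\theta$ but state-specific shape parameters $m_i$ follows at once.

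There is no real obstacle here; the statement is a one-line consequence of the two previously established results. The only thing worth flagging is that the mixing weights $\pi_{li}$ depend on the time index $l$ through the evolution of the Markov chain, so one should be explicit that $\boldsymbol{\pi}_l=\boldsymbol{\pi}_1\boldsymbol{\Gamma}^{l-1}$ rather than the limiting distribution $\boldsymbol{\delta}$ (the two coincide only when $\boldsymbol{\pi}_1=\boldsymbol{\delta}$, i.e., in the steady-state case).
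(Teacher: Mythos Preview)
Your proposal is correct and follows essentially the same approach as the paper: the paper's proof simply states ``Similar to the proof of Proposition~\ref{Lambda},'' which amounts to exactly the law-of-total-probability argument you spell out, combined with the Pascal state-dependent distribution from Proposition~\ref{Pascal-HMM}. Your version is just more explicit.
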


\begin{proof}
  Similar to the proof of Proposition \ref{Lambda}.
\end{proof}

\begin{Corollary}
  The $k$-variate joint distribution of $(N_{l_1},\ldots,N_{l_k})$ is
  \begin{align}
    \nonumber &P(N_{l_1}=n_{l_1},\ldots,N_{l_k}=n_{l_k})\\
    =&\sum_{i_1=1}^g\cdots\sum_{i_k=1}^g\beta_{(i_1,\dots,i_k)}\prod_{j=1}^kp(n_{l_j};m_{i_j},(d_{l_j}-d_{l_j-1})\omega_{l_j}\theta),
  \end{align}
  where
  \begin{equation}
    \beta_{(i_1,\cdots,i_k)}=\pi_{l_1,i_1}\gamma_{i_1,i_2}(l_2-l_1)\cdots\gamma_{i_{k-1},i_k}(l_k-l_{k-1}).
  \end{equation}
  \label{k-variate marginal}
\end{Corollary}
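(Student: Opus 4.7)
The plan is to condition on the joint hidden-state trajectory $(C_{l_1},\ldots,C_{l_k})$, invoke the Markov structure of the chain, and then combine it with the conditional independence of the counts across disjoint observation blocks. By the law of total probability,
\begin{equation*}
P(N_{l_1}=n_{l_1},\ldots,N_{l_k}=n_{l_k}) = \sum_{i_1=1}^g\cdots\sum_{i_k=1}^g P(N_{l_1}=n_{l_1},\ldots,N_{l_k}=n_{l_k}\mid C_{l_1}=i_1,\ldots,C_{l_k}=i_k)\, P(C_{l_1}=i_1,\ldots,C_{l_k}=i_k),
\end{equation*}
so the task reduces to identifying the two factors.

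For the joint probability of the hidden states, I would apply the Markov property along the subsequence $C_{l_1}\to C_{l_2}\to\cdots\to C_{l_k}$. Since the gaps between observation indices need not be one step, the relevant transition probabilities are the $(l_{j+1}-l_j)$-step transitions $\gamma_{i_j,i_{j+1}}(l_{j+1}-l_j)$, which are entries of $\boldsymbol{\Gamma}^{l_{j+1}-l_j}$ obtained via the Chapman--Kolmogorov equations from $\boldsymbol{\Gamma}$. Together with the marginal $P(C_{l_1}=i_1)=\pi_{l_1,i_1}$ from Proposition \ref{Lambda}, this yields exactly $\beta_{(i_1,\ldots,i_k)}$ as defined in the statement. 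For the conditional joint probability of the counts, I would argue that given the full hidden trajectory the random intensities $\Lambda_{l_1},\ldots,\Lambda_{l_k}$ are independent (because each $\Lambda_l$ depends only on $C_l$), while given the realized piecewise intensity function the underlying Cox process has independent Poisson-distributed increments over the disjoint intervals $[d_{l_j-1},d_{l_j})$. Integrating out each $\Lambda_{l_j}$ block by block as in the proof of Proposition \ref{Pascal-HMM} shows that the conditional joint probability factorizes into Pascal terms $p(n_{l_j};m_{i_j},(d_{l_j}-d_{l_j-1})\omega_{l_j}\theta)$.

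The main obstacle is really only bookkeeping; the one substantive point is recognizing that the HMM convention ``each $\Lambda_l$ depends only on the current state $C_l$'' yields conditional independence of the $\Lambda_l$ given the entire state trajectory, and that the Cox process structure then transmits this independence to the counts $N_{l_j}$ on disjoint intervals. Once these two ingredients are in place, substituting the two factors into the total-probability sum and reading off the result is immediate.
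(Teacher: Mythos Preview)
Your proposal is correct and follows essentially the same approach as the paper: condition on the hidden states $(C_{l_1},\ldots,C_{l_k})$, factor the state probability via the Markov property into $\pi_{l_1,i_1}\gamma_{i_1,i_2}(l_2-l_1)\cdots\gamma_{i_{k-1},i_k}(l_k-l_{k-1})$, and use the HMM conditional independence together with Proposition~\ref{Pascal-HMM} to factor the count probability into Pascal terms. The paper writes out only the bivariate case explicitly and remarks that the general case is analogous; your argument is slightly more explicit about why the conditional independence holds, but the substance is identical.
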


\begin{proof}
  We only prove the bivariate case and a similar argument can be applied to any higher dimension. For $l_1<l_2$, we have
  \begin{align*}
    &P(N_{l_1}=n_{l_1},N_{l_2}=n_{l_2})\\
    =&\sum_{i_1=1}^g\sum_{i_2=1}^gP(N_{l_1}=n_{l_1},N_{l_2}=n_{l_2},C_{l_1}=i_1,C_{l_2}=i_2)\\
    =&\sum_{i_1=1}^g\sum_{i_2=1}^gP(C_{l_1}=i_1,C_{l_2}=i_2)P(N_{l_1}=n_{l_1},N_{l_2}=n_{l_2}|C_{l_1}=i_1,C_{l_2}=i_2)\\
    =&\sum_{i_1=1}^g\sum_{i_2=1}^gP(C_{l_1}=i_1)P(C_{l_2}=i_2|C_{l_1}=i_1)P(N_{l_1}=n_{l_1}|C_{l_1}=i_1)P(N_{l_2}=n_{l_2}|C_{l_2}=i_2)\\
    =&\sum_{i_1=1}^g\sum_{i_2=1}^g\pi_{l_1,i_1}\gamma_{i_1,i_2}(l_2-l_1)\prod_{j=1}^2\left(p(n_{l_j};m_{i_j},(d_{l_j}-d_{l_j-1})\omega_{l_j}\theta\right)\\
    =&\sum_{i_1=1}^g\sum_{i_2=1}^g\beta_{(i_1,i_2)}\prod_{j=1}^2\left(p(n_{l_j};m_{i_j},(d_{l_j}-d_{l_j-1})\omega_{l_j}\theta\right).
     \qedhere
  \end{align*}
\end{proof}

Corollary \ref{k-variate marginal} shows that $k$-variate marginal of the discrete observations of the proposed model is a multivariate Pascal mixture, which is known to constitute a versatile class of discrete multivariate distributions (\citet{Badescu2015a}). Consequently the proposed Pascal-HMM provides great flexibility in modeling temporal dependence. The following theorem provides a closed-form expression for the ACF of $\{N_1,N_2,\cdots\}$.

\begin{Theorem}
  If we further assume that:
  \begin{enumerate}
    \item $(d_l-d_{l-1})\omega_l=1$, $l=1,2,\cdots$,
    \item the eigenvalues of $\boldsymbol{\Gamma}$ are all distinct and they are ordered as $1=e_1>e_2>\cdots>e_g\ge-1$,
  \end{enumerate}
  then the ACF (in the limiting sense) for $\{N_1,N_2,\ldots\}$ is given by
  \begin{equation}
    \rho(k)=\frac{Cov(N_l,N_{l+k})}{Var(N_l)}=\sum_{i=2}^gc_ie_i^k,~k=1,2,\cdots,
  \end{equation}
  where
  \begin{equation*}
    c_i=\frac{\boldsymbol{\delta}\boldsymbol{M}\boldsymbol{u}_i^T\boldsymbol{v}_i\boldsymbol{M}\boldsymbol{1}^T}{\sum_{i=1}^g\delta_im_i(m_i+\frac{1+\theta}{\theta})-(\sum_{i=1}^g\delta_im_i)^2}.
  \end{equation*}
  Here $\boldsymbol{M}=diag\{m_1,\ldots,m_g\}$, $\boldsymbol{1}=(1,\ldots,1)$, $\boldsymbol{v}_i$ and $\boldsymbol{u}_i^T$ are the left and right eigenvectors of $\boldsymbol{\Gamma}$ associated with $e_i$, and $\boldsymbol{v}_i\boldsymbol{u}_i^T=1$.
  \label{expression for rho(k)}
\end{Theorem}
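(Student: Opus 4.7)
My approach is to compute $\text{Var}(N_l)$ and $\text{Cov}(N_l,N_{l+k})$ directly from the HMM conditional structure, and then insert the spectral decomposition of $\boldsymbol{\Gamma}$ to separate out the $e_1=1$ contribution. Under Assumption 1, Proposition \ref{Pascal-HMM} specializes to $N_l\mid C_l=i\sim \text{Pascal}(m_i,\theta)$, giving the clean conditional moments $E[N_l\mid C_l=i]=m_i\theta$ and $\text{Var}(N_l\mid C_l=i)=m_i\theta(1+\theta)$. In the limiting sense I substitute $P(C_l=i)=\delta_i$, so that $E[N_l]=\theta\,\boldsymbol{\delta}\boldsymbol{M}\boldsymbol{1}^T$.

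First, by the law of total variance,
\begin{equation*}
\text{Var}(N_l)=\theta(1+\theta)\sum_i\delta_i m_i + \theta^2\Bigl(\sum_i\delta_i m_i^2-\bigl(\sum_i\delta_i m_i\bigr)^2\Bigr)=\theta^2\Bigl[\sum_i\delta_i m_i\bigl(m_i+\tfrac{1+\theta}{\theta}\bigr)-\bigl(\sum_i\delta_i m_i\bigr)^{2}\Bigr],
\end{equation*}
which is exactly $\theta^2$ times the denominator defining $c_i$. Next, the HMM construction (each $\Lambda_l$ depends only on $C_l$, and given the intensity process the Cox counts in disjoint intervals are independent) implies that $N_l$ and $N_{l+k}$ are conditionally independent given $(C_l,C_{l+k})$. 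Consequently
\begin{equation*}
E[N_l N_{l+k}]=\sum_{i,j}\delta_i\gamma_{ij}(k)(m_i\theta)(m_j\theta)=\theta^2\,\boldsymbol{\delta}\boldsymbol{M}\boldsymbol{\Gamma}^k\boldsymbol{M}\boldsymbol{1}^T,
\end{equation*}
so $\text{Cov}(N_l,N_{l+k})=\theta^2\bigl[\boldsymbol{\delta}\boldsymbol{M}\boldsymbol{\Gamma}^k\boldsymbol{M}\boldsymbol{1}^T-(\boldsymbol{\delta}\boldsymbol{M}\boldsymbol{1}^T)^2\bigr]$.

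Finally, Assumption 2 yields the spectral expansion $\boldsymbol{\Gamma}^k=\sum_{i=1}^g e_i^k\boldsymbol{u}_i^T\boldsymbol{v}_i$ under the normalization $\boldsymbol{v}_i\boldsymbol{u}_i^T=1$. Because $\boldsymbol{\Gamma}\boldsymbol{1}^T=\boldsymbol{1}^T$, $\boldsymbol{\delta}\boldsymbol{\Gamma}=\boldsymbol{\delta}$, and $\boldsymbol{\delta}\boldsymbol{1}^T=1$, I take $\boldsymbol{u}_1^T=\boldsymbol{1}^T$ and $\boldsymbol{v}_1=\boldsymbol{\delta}$. Plugging the expansion into the covariance, the $i=1$ term contributes $(\boldsymbol{\delta}\boldsymbol{M}\boldsymbol{1}^T)(\boldsymbol{\delta}\boldsymbol{M}\boldsymbol{1}^T)=(\boldsymbol{\delta}\boldsymbol{M}\boldsymbol{1}^T)^2$, which cancels exactly the subtracted $(\boldsymbol{\delta}\boldsymbol{M}\boldsymbol{1}^T)^2$. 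Dividing the remaining sum $\theta^2\sum_{i=2}^g e_i^k\,\boldsymbol{\delta}\boldsymbol{M}\boldsymbol{u}_i^T\boldsymbol{v}_i\boldsymbol{M}\boldsymbol{1}^T$ by $\text{Var}(N_l)$ (noting the $\theta^2$'s cancel) gives the claimed $\rho(k)=\sum_{i=2}^g c_i e_i^k$.

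The main obstacle—and really the only subtle point—is verifying that the $e_1=1$ spectral contribution cancels on the nose. This rests on pinning down the correct eigenvector pair for the eigenvalue $1$ under the normalization $\boldsymbol{v}_i\boldsymbol{u}_i^T=1$: the identities $\boldsymbol{\delta}\boldsymbol{\Gamma}=\boldsymbol{\delta}$, $\boldsymbol{\Gamma}\boldsymbol{1}^T=\boldsymbol{1}^T$, and $\boldsymbol{\delta}\boldsymbol{1}^T=1$ force $\boldsymbol{v}_1=\boldsymbol{\delta}$ and $\boldsymbol{u}_1^T=\boldsymbol{1}^T$ (rather than arbitrary nonzero scalar multiples). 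Everything else reduces to standard HMM moment algebra.
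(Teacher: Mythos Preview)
Your proposal is correct and follows essentially the same route as the paper: compute $E[N_lN_{l+k}]=\theta^2\boldsymbol{\delta}\boldsymbol{M}\boldsymbol{\Gamma}^k\boldsymbol{M}\boldsymbol{1}^T$ via the HMM conditional structure, insert the spectral decomposition $\boldsymbol{\Gamma}^k=\sum_i e_i^k\boldsymbol{u}_i^T\boldsymbol{v}_i$, cancel the $e_1=1$ term against $(\boldsymbol{\delta}\boldsymbol{M}\boldsymbol{1}^T)^2$, and divide by the variance obtained through the law of total variance. One tiny wording quibble: the normalization $\boldsymbol{v}_1\boldsymbol{u}_1^T=1$ does not literally force $\boldsymbol{u}_1^T=\boldsymbol{1}^T$ and $\boldsymbol{v}_1=\boldsymbol{\delta}$ individually (the paper writes $\boldsymbol{u}_1=c_1\boldsymbol{1}$, $\boldsymbol{v}_1=c_2\boldsymbol{\delta}$ with $c_1c_2=1$), but since only the rank-one product $\boldsymbol{u}_1^T\boldsymbol{v}_1=\boldsymbol{1}^T\boldsymbol{\delta}$ enters the computation, your choice is harmless.
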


\begin{Remark}
  The first assumption ensures that, in the limiting sense, $\{N_1,N_2,\ldots\}$ is a stationary time series, which means that the ACF is sufficient for describing its temporal dependence in the long run. Since one can adjust the exposure by scaling, this assumption essentially means that $(d_l-d_{l-1})\omega_l$ is irrelevant of $l$, which can be realized if we narrow/widen the lengths of periods when portfolio exposure is high/low. The second assumption is not restrictive in applications as matrices with distinct eigenvalues are dense in the matrix space.
\end{Remark}

\begin{proof}
  According to the Perron-Frobenius Theorem (e.g., see \citet{Bremaud2013}), we have the following decomposition
  \begin{equation*}
    \boldsymbol{\Gamma}^k=\sum_{i=1}^ge_i^k\boldsymbol{u}_i^T\boldsymbol{v}_i,
  \end{equation*}
  where $1=e_1>\cdots>e_g\ge-1$ are the distinct eigenvalues of $\boldsymbol{\Gamma}$ and $\boldsymbol{v}_i(\boldsymbol{u}_i^T)$ are the left(right) eigenvectors of $\boldsymbol{\Gamma}$ associated with $e_i$, satisfying $\boldsymbol{v}_i\boldsymbol{u}_i^T=1$. Since all the eigenvalues are distinct, all the left and right eigenvectors are determined up to multiplication by a non-zero scalar. In particular, we have $\boldsymbol{u}_1=c_1(1,\ldots,1)$ and $\boldsymbol{v}_1=c_2(\delta_1,\ldots,\delta_g)$, where $c_1c_2=1$ due to the constraint that $\boldsymbol{v}_i\boldsymbol{u}_i^T=1$.

  Similar to the proof of Corollary \ref{k-variate marginal}, one may check that
  \begin{align*}
    & Cov(N_l,N_{l+k})\\
    =&E(N_lN_{l+k})-E(N_l)^2\\
    =&\sum_{i=1}^g\sum_{j=1}^g \delta_i\gamma_{ij}(k)m_i\theta m_j\theta-\left(\sum_{i=1}^g\delta_im_i\theta\right)^2\\
    =&\theta^2\left(\boldsymbol{\delta}
    \begin{pmatrix}
      m_1 & \ldots & 0 \\
      \vdots & \ddots & \vdots \\
      0 & \ldots & m_g
    \end{pmatrix}
    \boldsymbol{\Gamma}^k
    \begin{pmatrix}
      m_1 \\
      \vdots \\
      m_g
    \end{pmatrix}-\left(\sum_{i=1}^g\delta_im_i\right)^2\right)\\
    =&\theta^2\left(\boldsymbol{\delta}
    \begin{pmatrix}
      m_1 & \ldots & 0 \\
      \vdots & \ddots & \vdots \\
      0 & \ldots & m_g
    \end{pmatrix}
    \left(\sum_{i=1}^ge_i^k\boldsymbol{u}_i^T\boldsymbol{v}_i\right)
    \begin{pmatrix}
      m_1 \\
      \vdots \\
      m_g
    \end{pmatrix}-\left(\sum_{i=1}^g\delta_im_i\right)^2\right)\\
    =& \theta^2\boldsymbol{\delta}
    \begin{pmatrix}
      m_1 & \ldots & 0 \\
      \vdots & \ddots & \vdots \\
      0 & \ldots & m_g
    \end{pmatrix}
    \boldsymbol{u}_1^T\boldsymbol{v}_1
    \begin{pmatrix}
      m_1 \\
      \vdots \\
      m_g
    \end{pmatrix}+
    \theta^2\sum_{i=2}^g\boldsymbol{\delta}
    \begin{pmatrix}
      m_1 & \ldots & 0 \\
      \vdots & \ddots & \vdots \\
      0 & \ldots & m_g
    \end{pmatrix}
    \boldsymbol{u}_i^T\boldsymbol{v}_i
    \begin{pmatrix}
      m_1 \\
      \vdots \\
      m_g
    \end{pmatrix}e_i^k\\
    &-\theta^2\left(\sum_{i=1}^g\delta_im_i\right)^2\\
    =&\theta^2\left(\left(\sum_{i=1}^g\delta_im_i\right)^2+
    \sum_{i=2}^g\boldsymbol{\delta}
    \begin{pmatrix}
      m_1 & \ldots & 0 \\
      \vdots & \ddots & \vdots \\
      0 & \ldots & m_g
    \end{pmatrix}
    \boldsymbol{u}_i^T\boldsymbol{v}_i
    \begin{pmatrix}
      m_1 \\
      \vdots \\
      m_g
    \end{pmatrix}e_i^k-\left(\sum_{i=1}^g\delta_im_i\right)^2\right)\\
    =&\theta^2\sum_{i=2}^g\boldsymbol{\delta}
    \begin{pmatrix}
      m_1 & \ldots & 0 \\
      \vdots & \ddots & \vdots \\
      0 & \ldots & m_g
    \end{pmatrix}
    \boldsymbol{u}_i^T\boldsymbol{v}_i
    \begin{pmatrix}
      m_1 \\
      \vdots \\
      m_g
    \end{pmatrix}e_i^k\\
    =&\theta^2 \sum_{i=2}^g\boldsymbol{\delta}\boldsymbol{M}\boldsymbol{u}_i^T\boldsymbol{v}_i\boldsymbol{M}\boldsymbol{1}^Te_i^k.
  \end{align*}
  The result follows by plugging
  \begin{equation*}
    Var(N)=\theta^2\left(\sum_{i=1}^g\delta_im_i\left(m_i+\frac{1+\theta}{\theta}\right)-(\sum_{i=1}^g\delta_im_i)^2\right)
  \end{equation*}
  into the expression for $\rho(k)$.
\end{proof}

Theorem \ref{expression for rho(k)} reveals that the ACF exhibits a power decaying pattern, which is similar to that of the popular ARIMA time series model. As a result, the proposed Pascal-HMM can achieve a wide range of temporal dependence structures by judiciously choosing $\boldsymbol{\Gamma}$. In particular, both positive and negative correlations can be realized, and this will be shown in more details in \citet{Badescu2015b}.

\section{\large An order statistics property} \label{section: An order statistics property}

When we compress the original data to its discretely observed process
$\{N_1,N_2,\cdots\}$, one may be concerned about any potential loss of information. We again note that here $N_l$ includes all the claims occurred during $[d_{l-1},d_l)$, no matter reported or not. In Theorem \ref{order statistics} below we show that a well-known order statistics property of the Poisson process holds in this more general case as well. This shows that the discretely observed  process preserves all the information about the claim arrival epochs. First we present an intermediate result in Lemma \ref{lemma for PP}, which can be easily checked using the independent increment property of the Poisson process.
\begin{Lemma}
  Assume that $\{N^a(t)\}$ is a marked Poisson process with intensity function $\lambda(t)$ and path-dependent mark $U$ with density function $p_{U|t}(u)$. The likelihood for the observations up to a given time $t$ is
\begin{equation*}
  P(N^a(t)=n,(T_i,U_i)\in(dt_i,du_i),i=1,\ldots,n)
	=e^{-\int_0^t\lambda(s)ds}\prod_{i=1}^n\left(\lambda(t_i)dt_i~p_{U_i|t_i}(du_i)\right).\qedhere
\end{equation*}
\label{lemma for PP}
\end{Lemma}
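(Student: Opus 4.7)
The plan is to establish the likelihood identity by the standard infinitesimal-partition argument for nonhomogeneous Poisson processes, then layer on the marks at the end using conditional independence. First I would reduce to the unmarked case: writing the event $\{N^a(t)=n,\,T_i\in(t_i,t_i+dt_i)\}$ as the intersection, over a partition of $[0,t]$, of the events ``exactly one arrival in each $(t_i,t_i+dt_i)$'' and ``no arrivals in the complementary set $[0,t]\setminus\bigcup_i(t_i,t_i+dt_i)$''. Since $\{N^a(t)\}$ has independent increments, these events factorize.

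Next I would evaluate each factor. For each infinitesimal interval the nonhomogeneous Poisson law gives
\begin{equation*}
  P\bigl(N^a(t_i+dt_i)-N^a(t_i)=1\bigr)=\lambda(t_i)\,dt_i\,e^{-\lambda(t_i)\,dt_i}=\lambda(t_i)\,dt_i+o(dt_i),
\end{equation*}
while on the complement
\begin{equation*}
  P\bigl(\text{no arrivals in }[0,t]\setminus\textstyle\bigcup_i(t_i,t_i+dt_i)\bigr)
  =\exp\!\left(-\int_{[0,t]\setminus\bigcup_i(t_i,t_i+dt_i)}\lambda(s)\,ds\right).
\end{equation*}
Multiplying these and letting the widths $dt_i\to 0$ collapses the complement's integral to $\int_0^t\lambda(s)\,ds$, giving the unmarked likelihood $e^{-\int_0^t\lambda(s)ds}\prod_{i=1}^n\lambda(t_i)\,dt_i$.

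Finally I would attach the marks. Conditional on the arrival epochs $T_1=t_1,\ldots,T_n=t_n$, the defining property of the marked point process is that the $U_i$ are mutually independent with $U_i$ drawn from $p_{U|t_i}(u)$. Hence
\begin{equation*}
  P\bigl(U_i\in(u_i,u_i+du_i),\,i=1,\ldots,n\mid T_1=t_1,\ldots,T_n=t_n\bigr)=\prod_{i=1}^n p_{U_i|t_i}(du_i),
\end{equation*}
and multiplying the conditional mark density against the unmarked likelihood yields the claimed expression.

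The main delicate point is really the infinitesimal bookkeeping: one must verify that the contributions $e^{-\lambda(t_i)dt_i}$ from the small intervals are absorbed into the exponential of the full integral as $dt_i\to 0$, and that the $o(dt_i)$ terms from events like ``two or more arrivals in $(t_i,t_i+dt_i)$'' are negligible. This is standard for Poisson processes with locally integrable intensity, and the independent increment structure quoted in the lemma statement is what makes the factorization and the limit pass through cleanly; everything else is just bookkeeping.
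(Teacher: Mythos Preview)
Your argument is correct and is exactly the standard elaboration of what the paper merely asserts: the paper's own proof is the one-line remark that the identity ``can be easily checked using the independent increment property of the Poisson process,'' and your partition-and-factorize computation, followed by conditioning to attach the marks, is precisely how one checks it.
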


\begin{Theorem}
  For $l=1,\cdots, k$, we assume that there are $n_l$ claims that have occurred during the period $[d_{l-1},d_l)$, together with their marks ordered chronologically according to the arrival epochs $\{(T_i^{(l)},U_i^{(l)})~i=1,2,\ldots,n_l\}$. We then have
\begin{align}
  \nonumber &P\left((T_i^{(l)},U_i^{(l)})\in(dt_i^{(l)},du_i^{(l)}),l=1,\ldots,k,i=1,\ldots,n_l|N_l=n_l,l=1,\ldots,k\right)\\
=& \prod_{l=1}^k\left(n_l!\prod_{i=1}^{n_l}\left(\frac{dt_i^{(l)}}{d_l-d_{l-1}}p_{U_i^{(l)}}(du_i^{(l)})\right)\right).\label{order statistics property}
\end{align}
\label{order statistics}
In other words, Given the discrete observations, the joint distribution of the 
claim arrival epochs with markings are mutually independent and the epochs are uniformly distributed. 

\end{Theorem}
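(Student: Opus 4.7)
The plan is to condition on the full intensity sequence and use Lemma \ref{lemma for PP}, then exploit the fact that the Erlang-HMM weighting of the intensities cancels when we form the conditional density. The Erlang-HMM structure therefore plays no essential role in this result; what matters is only that $\{N^a(t)\}$ is a Cox process whose intensity is constant on each $[d_{l-1},d_l)$ and whose marks $U_i$ are i.i.d.\ with density $p_U$ independently of the arrival times.

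First I would condition on $(\Lambda_1,\ldots,\Lambda_k)=(\lambda_1,\ldots,\lambda_k)$. By definition of a Cox process, conditional on this sequence $\{N^a(t)\}$ is an inhomogeneous Poisson process with piecewise constant intensity $\lambda_l$ on $[d_{l-1},d_l)$, and the marks remain i.i.d.\ with density $p_U$. Applying Lemma \ref{lemma for PP} period by period and using the independent-increment property of the Poisson process to factor across the $k$ disjoint intervals, I obtain the conditional joint likelihood
\begin{equation*}
\prod_{l=1}^{k}\Bigl(e^{-\lambda_l(d_l-d_{l-1})}\lambda_l^{n_l}\prod_{i=1}^{n_l}dt_i^{(l)}\,p_U(u_i^{(l)})\,du_i^{(l)}\Bigr).
\end{equation*}
Integrating against the joint law of $(\Lambda_1,\ldots,\Lambda_k)$ (whatever it may be, Erlang-HMM or otherwise) gives the unconditional joint density of $(N_l,(T_i^{(l)},U_i^{(l)}))_{l,i}$ as the product $\prod_{l,i}dt_i^{(l)}p_U(u_i^{(l)})du_i^{(l)}$ times the mixed moment
\begin{equation*}
A(n_1,\ldots,n_k)\;\triangleq\;E\!\left[\prod_{l=1}^{k}e^{-\Lambda_l(d_l-d_{l-1})}\Lambda_l^{n_l}\right].
\end{equation*}

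Next I would compute $P(N_l=n_l,\,l=1,\ldots,k)$ by the same conditioning argument. Conditional on the $\Lambda_l$'s the counts are independent Poissons, so this marginal probability equals $A(n_1,\ldots,n_k)\prod_{l=1}^{k}(d_l-d_{l-1})^{n_l}/n_l!$. Taking the ratio of the two displays, the factor $A(n_1,\ldots,n_k)$ cancels entirely, leaving
\begin{equation*}
\prod_{l=1}^{k}\Bigl(n_l!\prod_{i=1}^{n_l}\tfrac{dt_i^{(l)}}{d_l-d_{l-1}}\,p_U(u_i^{(l)})\,du_i^{(l)}\Bigr),
\end{equation*}
which is exactly \eqref{order statistics property}.

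The main obstacle is conceptual rather than computational: one must notice that the (generally intractable) mixed moment $A(n_1,\ldots,n_k)$ drops out of the conditional density. Once this cancellation is seen, the proof reduces to a clean two-step conditioning. A minor technical point to handle carefully is the factor $n_l!$: Lemma \ref{lemma for PP} is stated for chronologically ordered epochs, so the denominator comes from the ordered Poisson distribution $P(N_l=n_l\mid\Lambda_l)=e^{-\Lambda_l(d_l-d_{l-1})}(\Lambda_l(d_l-d_{l-1}))^{n_l}/n_l!$, and its reciprocal produces precisely the $n_l!$ appearing in the theorem.
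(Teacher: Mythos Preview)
Your proof is correct and follows essentially the same approach as the paper: condition on the intensity, apply Lemma~\ref{lemma for PP} to obtain the joint likelihood, compute the marginal of the counts the same way, and take the ratio so that the common intensity-dependent factor cancels. The only cosmetic difference is that the paper expands the expectation over $(\Lambda_1,\ldots,\Lambda_k)$ explicitly via the hidden states $C_1,\ldots,C_k$, whereas you keep it as the abstract mixed moment $A(n_1,\ldots,n_k)$; your formulation is slightly cleaner and, as you correctly observe, makes it transparent that the Erlang--HMM structure is irrelevant to this particular result.
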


\begin{proof}
  Since when given $\Lambda(t)=\lambda(t)$, the proposed model $\{N^a(t)\}$ is a marked Poisson process with intensity function $\lambda(t)$, according to Lemma \ref{lemma for PP} the likelihood for the observations up to time $d_k$ is
\begin{align*}
  & P\left(N_l=n_l,(T_i^{(l)},U_i^{(l)})\in(dt_i^{(l)},du_i^{(l)}),l=1,\ldots,k,i=1,\ldots,n_l\right)\\
=& E_{\Lambda_1,\ldots,\Lambda_k}\left(\prod_{l=1}^k\left(e^{-(d_l-d_{l-1})\Lambda_l}\prod_{i=1}^{n_l}\left(\Lambda_ldt_i^{(l)}
p_{U_i^{(l)}}(du_i^{(l)})\right)\right)\right),
\end{align*}
where $E_{\Lambda_1,\ldots,\Lambda_k}$ signifies taking expectation with respect to $\Lambda_1,\ldots,\Lambda_k$. This can be further calculated as
\begin{align*}
  &\sum_{i_1=1}^g\cdots\sum_{i_k=1}^gP(C_1=i_1,\cdots,C_k=i_k)\\
& \cdot\int_0^\infty\cdots\int_0^\infty\left(\prod_{l=1}^k\left(e^{-(d_l-d_{l-1})\lambda_l}\prod_{i=1}^{n_l}\left(\lambda_ldt_i^{(l)}
p_{U_i^{(l)}}(du_i^{(l)})\right)\right)\right)\\
& \cdot f_{\Lambda_1,\cdots,\Lambda_k|C_1=i_1,\cdots,C_k=i_k}(\lambda_1,\ldots,\lambda_k)d\lambda_1\ldots d\lambda_k\\
=&\sum_{i_1=1}^g\cdots\sum_{i_k=1}^gP(C_1=i_1,\cdots,C_k=i_k)\\
& \cdot\prod_{l=1}^k\left(\int_0^{\infty}\lambda_l^{n_l}
e^{-(d_l-d_{l-1})\lambda_l}f_{\Lambda_l|C_l=i_l}(\lambda_l)d\lambda_l\right)\prod_{l=1}^k\prod_{i=1}^{n_l}\left(dt_i^{(l)}p_{U_i^{(l)}}(du_i^{(l)})\right).
\end{align*}
At the same time,
\begin{align*}
  & P(N_1=n_1,\cdots,N_k=n_k)\\
=&\sum_{i_1=1}^g\cdots\sum_{i_k=1}^gP(C_1=i_1,\cdots,C_k=i_k)\prod_{l=1}^kP(N_l=n_l|C_l=i_l)\\
=& \sum_{i_1=1}^g\cdots\sum_{i_k=1}^gP(C_1=i_1,\cdots,C_k=i_k)\\
&\cdot \prod_{l=1}^k\left(\int_0^\infty\frac{\left((d_l-d_{l-1})\lambda_l\right)^{n_l}e^{-(d_l-d_{l-1})\lambda_l}}{n_l!}f_{\Lambda_l|C_l=i_l}(\lambda_l)d\lambda_l\right).
\end{align*}
Combining these two equations yields the result.
\end{proof}

According to \eqref{order statistics property}, with the discrete observations are given, the joint distribution of the 
claim arrival epochs with markings is completely specified. Furthermore, this order statistics is critical in the estimation of the proposed model.

\section{\large The distribution of the numbers of reported and IBNR claims}
\label{section: Predicting the number of IBNR claims: a closed-form expression}

In this section we derive explicit analytical expressions for the distribution of the number of reported claims and the distribution of
the number of IBNR claims. 

\begin{Proposition}
Recall that $N_l^{r}$ is the number of reported claims in time interval $[d_{l-1},d_l)$.
Then the joint probability of $N_l^{r}, l=1, \cdots, k$, where $\tau=d_k$ is the valuation date, can explicitly be expressed 
as
\begin{equation}\label{jp_reported}
  P(N_1^{r}=n_1,\cdots,N_k^{r}=n_k)=\sum_{i_1=1}^g\cdots\sum_{i_k=1}^g\beta_{(i_1,\cdots,i_k)}\prod_{j=1}^kp(n_j;m_{i_j},\theta_j^{r}),
\end{equation}
where
\begin{equation*}
  \beta_{(i_1,\cdots,i_k)}=\pi_{1,i_1}\gamma_{i_1,i_2}\cdots\gamma_{i_{k-1},i_k}
\end{equation*}
and
\begin{equation*}
  \theta_j^{r}=\left(\int_{d_{j-1}}^{d_j}P_U(\tau-t)dt\right)\omega_j\theta,~j=1,2,\cdots,k.
\end{equation*}
Similarly, the  joint probability of $N_l^{IBNR}, l=1, \cdots, k$, the
numbers of IBNR claims in the same time intervals up to valuation date $\tau=d_k$ 
can explicitly be expressed as
\begin{equation}\label{jp_IBNR}
  P(N_1^{IBNR}=n_1,\cdots,N_k^{IBNR}=n_k)=\sum_{i_1=1}^g\cdots\sum_{i_k=1}^g\beta_{(i_1,\cdots,i_k)}\prod_{j=1}^kp(n_j;m_{i_j},\theta_j^{IBNR}),
\end{equation}
where
\begin{equation*}
  \theta_j^{IBNR}=\left(\int_{d_{j-1}}^{d_j}\left(1-P_U(\tau-t)\right)dt\right)\omega_j\theta,~j=1,2,\cdots,k.
\end{equation*}
\end{Proposition}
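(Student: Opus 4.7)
The plan is to reduce both assertions to the already-established Pascal-HMM structure. By Corollary \ref{reported and IBNR process}, the discretely observed reported claim process $\{N_1^r,\ldots,N_k^r\}$ and the discretely observed IBNR claim process $\{N_1^{IBNR},\ldots,N_k^{IBNR}\}$ are Pascal-HMMs sharing the hidden chain $\{C_1,\ldots,C_k\}$ (with initial distribution $\boldsymbol{\pi}_1$ and transition matrix $\boldsymbol{\Gamma}$), and the state-dependent distributions are precisely $p(n;m_i,\theta_l^{r})$ and $p(n;m_i,\theta_l^{IBNR})$ with the $\theta$'s as stated. So the entire content of the proposition is: the finite-dimensional joint distribution of an HMM equals the sum, over hidden-state paths, of the path probability times the product of state-dependent probabilities.

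First I would write
\begin{equation*}
P(N_1^{r}=n_1,\ldots,N_k^{r}=n_k)=\sum_{i_1=1}^g\cdots\sum_{i_k=1}^g P(C_1=i_1,\ldots,C_k=i_k)\,P(N_1^{r}=n_1,\ldots,N_k^{r}=n_k\mid C_1=i_1,\ldots,C_k=i_k)
\end{equation*}
by the law of total probability. Next I would apply the time-homogeneous Markov property to factor the path probability as $P(C_1=i_1,\ldots,C_k=i_k)=\pi_{1,i_1}\gamma_{i_1,i_2}\cdots\gamma_{i_{k-1},i_k}=\beta_{(i_1,\ldots,i_k)}$. Then I would invoke conditional independence of the observations given the hidden states, which is a defining feature of an HMM, to factor $P(N_1^{r}=n_1,\ldots,N_k^{r}=n_k\mid C_1=i_1,\ldots,C_k=i_k)=\prod_{j=1}^k p(n_j;m_{i_j},\theta_j^{r})$. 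Combining these pieces yields \eqref{jp_reported}. The IBNR case \eqref{jp_IBNR} follows by an identical argument with $\theta_j^{r}$ replaced by $\theta_j^{IBNR}$.

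The only point that needs a brief justification is the conditional independence of $\{N_l^{r}\}$ (resp.\ $\{N_l^{IBNR}\}$) given the hidden chain, since the marked reported process is a \emph{thinned} version of the original Cox process. This can be derived just as in the proof of Proposition \ref{Pascal-HMM}: conditionally on $(C_1,\ldots,C_k)$, the intensities $(\Lambda_1,\ldots,\Lambda_k)$ are independent (Erlang in each piece), and by Theorem \ref{thinning_reported_IBNR} the reported (resp.\ IBNR) process is a marked Cox process with intensity $\Lambda(t)P_U(\tau-t)$ (resp.\ $\Lambda(t)(1-P_U(\tau-t))$); hence integrating over disjoint intervals $[d_{l-1},d_l)$ yields independent counts given the $\Lambda_l$'s, and the Pascal form in each interval follows as in Corollary \ref{reported and IBNR process}.

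There is no real obstacle here — the argument is essentially the one used in Corollary \ref{k-variate marginal}, specialized to consecutive indices $l_j=j$ (so the multi-step transitions $\gamma_{i_{j-1},i_j}(l_j-l_{j-1})$ collapse to single-step $\gamma_{i_{j-1},i_j}$). The only mild care required is in bookkeeping: making sure the $\theta_j^{r}$ and $\theta_j^{IBNR}$ inherited from Corollary \ref{reported and IBNR process} are substituted into the same path-sum formula, and that all $k$ periods $[d_{j-1},d_j)$, $j=1,\ldots,k$, enter the product since $\tau=d_k$.
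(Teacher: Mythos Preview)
Your proposal is correct and follows essentially the same approach as the paper: the paper's proof simply cites Corollaries \ref{reported and IBNR process} and \ref{k-variate marginal}, and you have unpacked exactly those two ingredients (the Pascal-HMM structure of the thinned processes, plus the path-sum formula for the joint law of an HMM specialized to consecutive indices). Your additional remark about conditional independence given the hidden chain is a helpful elaboration but not a departure from the paper's route.
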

\begin{proof}
  It can be easily checked using Corollaries \ref{k-variate marginal} and \ref{reported and IBNR process}.
\end{proof}

In the following, we show that the total number of reported claims  and the total number of IBNR claims up to the valuation date
have a univariate Pascal mixture with a common scale parameter. 
For notational simplicity, we re-expressed the joint probabilities (\ref{jp_reported})  and (\ref{jp_IBNR}) as infinite series
 with a finite number of non-zero coefficients:
\begin{align}
  \nonumber &P(N_1^{r}=n_1,\cdots,N_k^{r}=n_k)\\
  =&\sum_{m_1=1}^\infty\cdots\sum_{m_k=1}^\infty\beta_{(m_1,\cdots,m_k)}\prod_{j=1}^kp(n_j;m_j,\theta_j^{r}),
  \label{infinite sum expression2}
\end{align}
and
\begin{align}
  \nonumber &P(N_1^{IBNR}=n_1,\cdots,N_k^{IBNR}=n_k)\\
  =&\sum_{m_1=1}^\infty\cdots\sum_{m_k=1}^\infty\beta_{(m_1,\cdots,m_k)}\prod_{j=1}^kp(n_j;m_j,\theta_j^{IBNR}),
  \label{infinite sum expression1}
\end{align}
where $\beta_{(m_1,\cdots,m_k)}=0$ if one of the integers $m_1, \cdots, m_k$ is not a shape parameter in the Pascal mixture (\ref{jp_reported}).
Denote 
$N^r=\sum_{l=1}^k N_l^r$, 
the total number of reported claims up to the valuation date $\tau$, and 
$N^{IBNR}=\sum_{l=1}^kN_l^{IBNR}$, the total number of IBNR claims.  Note that the both distributions are  multivariate Pascal mixtures with scale parameters varying over different dimensions.  To show  
the distribution of $N^r$ and $N^{IBNR}$ has a univariate Pascal mixture, we employ a two step procedure:
\begin{itemize}
  \item First re-express \eqref{infinite sum expression1} and \eqref{infinite sum expression2} using a common scale parameter over different dimensions. This is an approach presented in \citet{WillmotWoo2015}.
  \item Prove that the sum of the components of a multivariate Pascal mixture with a common scale parameter is a univariate Pascal mixture.
\end{itemize}

We begin with a lemma.
\begin{Lemma}
  Assume that $(N_1,\cdots,N_k)$ is a multivariate Pascal mixture with varying scale parameters over different dimensions, i.e.,
\begin{equation}
  P(N_1=n_1,\cdots,N_k=n_k)=\sum_{m_1=1}^\infty\cdots\sum_{m_k=1}^\infty\beta_{(m_1,\cdots,m_k)}\prod_{j=1}^kp(n_j;m_j,\theta_j).
\label{multivariate Pascal mixture with different scale parameters}
\end{equation}
Then its probability generating function (PGF) is given by
\begin{equation*}
  P(z_1,\cdots,z_k)=Q\left(\frac{1}{1-\theta_1(z_1-1)},\cdots,\frac{1}{1-\theta_k(z_k-1)}\right),
\end{equation*}
where
\begin{equation*}
  Q(z_1,\cdots,z_k)=\sum_{m_1=1}^\infty\cdots\sum_{m_k=1}^\infty\beta_{(m_1,\ldots,m_k)}\prod_{j=1}^kz_j^{m_j}. \qedhere
\end{equation*}
\label{PGF of multivariate Pascal mixture}
\end{Lemma}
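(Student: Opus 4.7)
The plan is to compute the PGF
$$P(z_1,\ldots,z_k)=\sum_{n_1=0}^\infty\cdots\sum_{n_k=0}^\infty z_1^{n_1}\cdots z_k^{n_k}\,P(N_1=n_1,\ldots,N_k=n_k)$$
directly from the mixture representation \eqref{multivariate Pascal mixture with different scale parameters}, and reduce everything to the univariate Pascal PGF. First I would substitute the mixture formula into the definition of the PGF and interchange the $(n_1,\ldots,n_k)$-sums with the $(m_1,\ldots,m_k)$-sums. As the authors note just before the lemma, $\beta_{(m_1,\ldots,m_k)}$ has only finitely many non-zero entries, so the $m$-sum is in fact a finite sum and no Fubini/Tonelli subtlety arises: the interchange is purely formal.

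After the swap the summand factorizes across the $k$ coordinates, and the inner sum over $n_j$ becomes the PGF of a Pascal$(m_j,\theta_j)$ variable at $z_j$. I would then record the required univariate identity
$$\sum_{n=0}^\infty z^n\,p(n;m,\theta)=\Bigl(\tfrac{1}{1+\theta}\Bigr)^m\sum_{n=0}^\infty\binom{n+m-1}{m-1}\Bigl(\tfrac{\theta z}{1+\theta}\Bigr)^n=\frac{1}{\bigl(1-\theta(z-1)\bigr)^m},$$
where the middle equality uses the standard binomial series $\sum_{n\ge 0}\binom{n+m-1}{m-1}x^n=(1-x)^{-m}$, which converges for $|\theta z/(1+\theta)|<1$ and in particular on a neighbourhood of $z=1$. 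Substituting this back and pulling $\beta_{(m_1,\ldots,m_k)}$ out of the inner sums yields
$$P(z_1,\ldots,z_k)=\sum_{m_1=1}^\infty\cdots\sum_{m_k=1}^\infty\beta_{(m_1,\ldots,m_k)}\prod_{j=1}^k\left(\frac{1}{1-\theta_j(z_j-1)}\right)^{m_j},$$
which, by the very definition of $Q$, is $Q\bigl(1/(1-\theta_1(z_1-1)),\ldots,1/(1-\theta_k(z_k-1))\bigr)$.

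There is no real obstacle here; the two points demanding a line of care are verifying the Pascal PGF under the paper's scale parametrization $p(n;m,\theta)=\binom{n+m-1}{m-1}(1+\theta)^{-m}\bigl(\theta/(1+\theta)\bigr)^n$ (a one-step binomial-series computation, as above), and justifying the sum interchange (immediate from the finite support of $\beta$, as already noted). Everything else is bookkeeping: the separation of variables across $j=1,\ldots,k$ is automatic because both $\prod_j p(n_j;m_j,\theta_j)$ in the mixture and $\prod_j z_j^{n_j}$ in the PGF factor across coordinates, so the $n$-summations split into a product of $k$ independent geometric-type sums with no cross terms.
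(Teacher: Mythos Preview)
Your proposal is correct and is exactly the direct computation the paper has in mind: the paper's own proof consists of the single line ``The proof is straightforward.'' The only minor remark is that the lemma is stated for a general multivariate Pascal mixture, so rather than relying on finite support of $\beta$ you could simply invoke Tonelli (all terms nonnegative for $z_j\ge 0$) to justify the interchange; either way the argument goes through without difficulty.
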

\begin{proof}
  The proof is straightforward.
\end{proof}

\begin{Proposition}
  Assume that $(N_1,\cdots,N_k)$ is a multivariate Pascal mixture with varying scale parameters over different dimensions and its joint probability is given by \eqref{multivariate Pascal mixture with different scale parameters}. For any $\theta\le\min\{\theta_1,\cdots,\theta_k\}$, \eqref{multivariate Pascal mixture with different scale parameters} can be re-expressed as
\begin{equation}
  P(N_1=n_1,\cdots,N_k=n_k)=\sum_{m_1=1}^\infty\cdots\sum_{m_k=1}^\infty\tilde{\beta}_{(m_1,\cdots,m_k)}p(n_j;m_j,\theta),
  \label{multiPascalcommontheta}
\end{equation}
where
\begin{equation*}
  \tilde{\beta}_{(m_1,\cdots,m_k)}=\sum_{n_1=1}^{m_1}\cdots\sum_{n_k=1}^{m_k}\beta_{(n_1,\cdots,n_k)}\prod_{j=1}^k\dbinom{m_j-1}{n_j-1}\left(\frac{\theta}{\theta_j}\right)^{n_j}
\left(1-\frac{\theta}{\theta_j}\right)^{m_j-n_j}.
\end{equation*}
\label{change of scale parameter}
\end{Proposition}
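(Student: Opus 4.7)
The plan is to work at the level of probability generating functions, invoking Lemma \ref{PGF of multivariate Pascal mixture} twice, and to connect the two PGF representations via a negative-binomial power-series expansion.

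First, I would write the PGF of $(N_1,\ldots,N_k)$ in two ways. By Lemma \ref{PGF of multivariate Pascal mixture}, representation \eqref{multivariate Pascal mixture with different scale parameters} gives $P(z_1,\ldots,z_k) = Q(w_1,\ldots,w_k)$ with $w_j = 1/(1-\theta_j(z_j-1))$, whereas the candidate form \eqref{multiPascalcommontheta} would give $P(z_1,\ldots,z_k) = \tilde{Q}(\tilde{w}_1,\ldots,\tilde{w}_k)$ with $\tilde{w}_j = 1/(1-\theta(z_j-1))$, where $\tilde{Q}$ is built from the weights $\tilde{\beta}$. It therefore suffices to verify the formal power-series identity $Q(w_1,\ldots,w_k) = \tilde{Q}(\tilde{w}_1,\ldots,\tilde{w}_k)$ after $w_j$ is expressed in terms of $\tilde{w}_j$.

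Second, I would eliminate $z_j$ from the two defining relations to obtain the Möbius-type identity
\[
w_j \;=\; \frac{(\theta/\theta_j)\,\tilde{w}_j}{1 - (1-\theta/\theta_j)\,\tilde{w}_j}.
\]
The hypothesis $\theta \le \min_j\theta_j$ puts $1-\theta/\theta_j$ in $[0,1)$, which both legitimizes the geometric/negative-binomial expansion below and forces the $\tilde{\beta}$'s to be non-negative. Applying $(1-x)^{-n_j} = \sum_{s\ge 0}\binom{n_j+s-1}{s}x^s$ and reindexing via $m_j = n_j + s$ (using $\binom{m_j-1}{m_j-n_j} = \binom{m_j-1}{n_j-1}$) yields
\[
w_j^{n_j} \;=\; \sum_{m_j \ge n_j}\binom{m_j-1}{n_j-1}\left(\frac{\theta}{\theta_j}\right)^{n_j}\left(1-\frac{\theta}{\theta_j}\right)^{m_j-n_j}\tilde{w}_j^{m_j}.
\]

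Third, I would substitute this expansion into $Q(w_1,\ldots,w_k) = \sum_{n_1,\ldots,n_k}\beta_{(n_1,\ldots,n_k)}\prod_j w_j^{n_j}$, interchange summations, and extract the coefficient of $\prod_j \tilde{w}_j^{m_j}$. This produces exactly the $\tilde{\beta}_{(m_1,\ldots,m_k)}$ of the statement, so that $P(z_1,\ldots,z_k) = \sum_{m_1,\ldots,m_k}\tilde{\beta}_{(m_1,\ldots,m_k)}\prod_j \tilde{w}_j^{m_j}$. A second application of Lemma \ref{PGF of multivariate Pascal mixture}, run in reverse, identifies the right-hand side as the PGF of the multivariate Pascal mixture with common scale parameter $\theta$ and weights $\tilde{\beta}$, and uniqueness of PGFs delivers \eqref{multiPascalcommontheta}.

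The main obstacle is the algebraic bookkeeping in the middle step: deriving the Möbius relation cleanly and massaging the negative-binomial series so that the reindexing lands precisely on $\binom{m_j-1}{n_j-1}$. The interchange of the $n_j$-sums (in fact finite, since only finitely many $\beta_{(n_1,\ldots,n_k)}$ are non-zero) with the $m_j$-expansions is then automatic. Two sanity checks are worth flagging: non-negativity of $\tilde{\beta}$ follows from $\theta \le \theta_j$ (each factor $(1-\theta/\theta_j)^{m_j-n_j}$ is non-negative), and the normalization $\sum \tilde{\beta} = 1$ follows by evaluating both PGF representations at $z_1=\cdots=z_k=1$.
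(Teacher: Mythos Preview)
Your proposal is correct and follows essentially the same route as the paper: both arguments pass to the PGF via Lemma~\ref{PGF of multivariate Pascal mixture}, rewrite each argument $1/(1-\theta_j(z_j-1))$ as the linear-fractional expression $(\theta/\theta_j)\tilde w_j/[1-(1-\theta/\theta_j)\tilde w_j]$ in $\tilde w_j=1/(1-\theta(z_j-1))$, expand by the negative-binomial series, interchange sums to read off the new mixing weights, and conclude by uniqueness of the PGF. Your added remarks on non-negativity and normalization of $\tilde\beta$ are not in the paper but are welcome sanity checks.
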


\begin{proof}
  The proof is similar to that in \citet{WillmotWoo2015}. The PGF of $(N_1,\cdots,N_k)$ is
  \begin{align*}
    &P(z_1,\cdots,z_k)\\
    =&Q\left(\frac{1}{1-\theta_1(z_1-1)},\cdots,\frac{1}{1-\theta_k(z_k-1)}\right)\\
    =&Q\left(\frac{1}{1-\theta(z_1-1)}\frac{\theta}{\theta_1}\frac{1}{1-\frac{1-\frac{\theta}{\theta_1}}{1-\theta(z_1-1)}},\cdots,
    \frac{1}{1-\theta(z_k-1)}\frac{\theta}{\theta_k}\frac{1}{1-\frac{1-\frac{\theta}{\theta_k}}{1-\theta(z_k-1)}}\right)\\
    =&\tilde{Q}\left(\frac{1}{1-\theta(z_1-1)},\cdots,\frac{1}{1-\theta(z_k-1)}\right),
  \end{align*}
  where
  \begin{align*}
    &\tilde{Q}(z_1,\cdots,z_k)\\
    =&Q\left(\frac{\frac{\theta}{\theta_1}z_1}{1-(1-\frac{\theta}{\theta_1})z_1},\ldots,
    \frac{\frac{\theta}{\theta_k}z_k}{1-(1-\frac{\theta}{\theta_k})z_k}\right)\\
    =&\sum_{m_1\ge1}\cdots\sum_{m_k\ge1}\beta_{(m_1,\cdots,m_k)}\prod_{j=1}^k\left(\frac{\frac{\theta}{\theta_j}z_j}{1-(1-\frac{\theta}{\theta_j})z_j}\right)^{m_j}\\
    =&\sum_{m_1\ge1}\cdots\sum_{m_k\ge1}\beta_{(m_1,\cdots,m_k)}\left(\sum_{n_j\geq m_j}\dbinom{n_j-1}{m_j-1}\left(\frac{\theta}{\theta_j}\right)^{m_j}\left(1-\frac{\theta}{\theta_j}\right)^{n_j-m_j}z_j^{n_j}\right)\\
    =&\sum_{m_1\ge1}\cdots\sum_{m_k\ge1}\beta_{(m_1,\cdots,m_k)}\sum_{n_1\geq m_1}\cdots\sum_{n_k\geq m_k}\left(\prod_{j=1}^k\dbinom{n_j-1}{m_j-1}\left(\frac{\theta}{\theta_j}\right)^{m_j}\left(1-\frac{\theta}{\theta_j}\right)^{n_j-m_j}z_j^{n_j}\right)\\
    =&\sum_{n_1\geq 1}\cdots\sum_{n_k\geq 1}\left(\sum_{m_1=1}^{n_1}\cdots\sum_{m_k=1}^{n_k}\beta_{(m_1,\cdots,m_k)}\prod_{j=1}^k\dbinom{n_j-1}{m_j-1}\left(\frac{\theta}{\theta_j}\right)^{m_j}\left(1-\frac{\theta}{\theta_j}\right)^{n_j-m_j}\right)\prod_{j=1}^kz_j^{n_j}\\
    =&\sum_{n_1\geq 1}\cdots\sum_{n_k\geq 1}\tilde{\beta}_{(n_1,\cdots,n_k)}\prod_{j=1}^k z_j^{n_j}.
  \end{align*}
  The conclusion follows from Lemma \ref{PGF of multivariate Pascal mixture} and the uniqueness of the PGF.
\end{proof}

\begin{Lemma}
  If $(N_1,\cdots,N_k)$ is a multivariate Pascal mixture with common scale parameter given by \eqref{multiPascalcommontheta},
then each marginal $N_i$ has the following stochastic representation:
\begin{equation*}
  N_i=\sum_{j=1}^{M_i}G_{ij},~i=1,2,\cdots,k,
\end{equation*}
where $G_{ij}$ are i.i.d geometric random variables with mean $\theta$, and the count variables $(M_1,\cdots,M_k)$ have a joint probability function
\begin{equation*}
  P(M_1=m_1,\cdots,M_k=m_k)=\tilde{\beta}_{(m_1,\cdots,m_k)}.
\end{equation*}
\label{Stochastic repre}
\end{Lemma}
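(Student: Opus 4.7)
The plan is to construct random variables $(M_1,\ldots,M_k)$ and $\{G_{ij}\}$ on a suitable probability space so that the displayed $N_i$ reproduces the joint mass function \eqref{multiPascalcommontheta}. The verification splits into three pieces: checking that $\tilde{\beta}_{(m_1,\ldots,m_k)}$ is a legitimate probability mass function on $\mathbb{N}^k$, writing down the construction, and computing the joint law of the resulting $(N_1,\ldots,N_k)$.

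First I would check that $\tilde{\beta}$ is a valid PMF. Nonnegativity is immediate: the hypothesis $\theta\le\min\{\theta_1,\ldots,\theta_k\}$ makes each $\theta/\theta_j\in(0,1]$, so every factor $\binom{m_j-1}{n_j-1}(\theta/\theta_j)^{n_j}(1-\theta/\theta_j)^{m_j-n_j}$ is a nonnegative binomial weight, and the $\beta_{(n_1,\ldots,n_k)}$ are probabilities. That $\tilde{\beta}$ sums to one follows from Proposition \ref{change of scale parameter}: evaluating the identity $P(z_1,\ldots,z_k)=\tilde{Q}(1/(1-\theta(z_1-1)),\ldots)$ at $z_1=\cdots=z_k=1$ gives $\tilde{Q}(1,\ldots,1)=P(1,\ldots,1)=1$, and $\tilde{Q}(1,\ldots,1)=\sum \tilde{\beta}_{(m_1,\ldots,m_k)}$ by the series expansion derived there.

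Next I would set up the probabilistic construction. Let $(M_1,\ldots,M_k)$ have joint PMF $\tilde{\beta}$, and independently let $\{G_{ij}:1\le i\le k,\,j\ge1\}$ be i.i.d.\ geometric with mean $\theta$. Define $N_i=\sum_{j=1}^{M_i}G_{ij}$. By the standard representation of the Pascal distribution as a sum of independent geometrics, conditional on $M_i=m_i$ the variable $N_i$ is Pascal$(m_i,\theta)$, and conditional on $(M_1,\ldots,M_k)=(m_1,\ldots,m_k)$ the $N_i$ are mutually independent (because the $G_{ij}$ blocks are disjoint). Hence
\begin{equation*}
P(N_1=n_1,\ldots,N_k=n_k)=\sum_{m_1\ge1}\cdots\sum_{m_k\ge1}\tilde{\beta}_{(m_1,\ldots,m_k)}\prod_{j=1}^k p(n_j;m_j,\theta),
\end{equation*}
which coincides with \eqref{multiPascalcommontheta}. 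It then remains to note that the marginal of $M_i$ is obtained by summing $\tilde{\beta}$ over the other indices, so the claimed stochastic representation holds.

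The only real subtlety is the first step, since nonnegativity and summability of $\tilde{\beta}$ both hinge critically on the assumption $\theta\le\min\{\theta_j\}$; without it, the binomial factors lose their probabilistic interpretation and $\tilde{\beta}$ need not define a probability measure. Everything after that is a routine application of the geometric-sum representation of the Pascal law combined with conditional independence.
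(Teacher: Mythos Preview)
Your proof is correct. The paper's own proof is a one-liner: ``It is easy to check by calculating the PGFs of both representations.'' That is, the paper would compute the joint PGF of the compound geometric construction as
\[
E\Bigl(\prod_{i=1}^k z_i^{N_i}\Bigr)
=\sum_{m_1,\ldots,m_k}\tilde\beta_{(m_1,\ldots,m_k)}\prod_{i=1}^k\Bigl(\frac{1}{1-\theta(z_i-1)}\Bigr)^{m_i}
=\tilde Q\Bigl(\frac{1}{1-\theta(z_1-1)},\ldots,\frac{1}{1-\theta(z_k-1)}\Bigr),
\]
and match it with Lemma~\ref{PGF of multivariate Pascal mixture}. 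You instead bypass generating functions and compute the joint PMF directly by conditioning on $(M_1,\ldots,M_k)$ and invoking the geometric-sum representation of the Pascal law. Both routes are equally short; yours has the small advantage of making explicit why $\tilde\beta$ is a bona fide PMF (a point the paper leaves implicit), while the PGF route connects more cleanly with the machinery already set up in Lemma~\ref{PGF of multivariate Pascal mixture} and Proposition~\ref{change of scale parameter}. One minor remark: your appeal to the hypothesis $\theta\le\min\{\theta_j\}$ for nonnegativity of $\tilde\beta$ is really an appeal to the context of Proposition~\ref{change of scale parameter}; strictly speaking the lemma as stated only assumes \eqref{multiPascalcommontheta} is a joint PMF, from which $\sum\tilde\beta=1$ follows immediately by summing out the $n_j$, and nonnegativity is inherited from the explicit formula in that proposition.
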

\begin{proof}
  It is easy to check by calculating the PGFs of both representations.
\end{proof}

\begin{Proposition}
  If $(N_1,\cdots,N_k)$ is a multivariate Pascal mixture with common scale parameter and its joint probability function given by \eqref{multiPascalcommontheta}, then $N=N_1+\ldots,N_k$ is a univariate Pascal mixture with the same scale parameter and its mixing weights are
\begin{equation*}
  \tilde{\beta}_m^N=\sum_{m_1+\cdots+m_k=m}\tilde{\beta}_{(m_1,\cdots,m_k)}.
\end{equation*}
\label{multivariate to univariate}
\end{Proposition}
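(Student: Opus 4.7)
The plan is to use the stochastic representation supplied by Lemma \ref{Stochastic repre}, since that lemma already does most of the structural work. By that lemma, we may write
\[
N_i \;=\; \sum_{j=1}^{M_i} G_{ij}, \qquad i=1,\ldots,k,
\]
where the $G_{ij}$ form an i.i.d.\ family of geometric random variables with mean $\theta$ (independent across all pairs $(i,j)$), and the count vector $(M_1,\ldots,M_k)$, independent of the $G_{ij}$'s, has joint mass function $\tilde\beta_{(m_1,\ldots,m_k)}$.

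The first step is to add these representations. Setting $M := M_1+\cdots+M_k$, we have
\[
N \;=\; \sum_{i=1}^k N_i \;=\; \sum_{i=1}^k \sum_{j=1}^{M_i} G_{ij},
\]
which is simply a sum of $M$ i.i.d.\ geometric$(\theta)$ random variables (the double indexing only relabels the summands). Therefore, conditional on $M=m$, $N$ is Pascal with shape $m$ and scale $\theta$, i.e.\ it has mass function $p(\,\cdot\,;m,\theta)$.

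The second step is to un-condition. The mixing weight attached to shape $m$ is
\[
P(M=m) \;=\; \sum_{m_1+\cdots+m_k=m} P(M_1=m_1,\ldots,M_k=m_k) \;=\; \sum_{m_1+\cdots+m_k=m} \tilde\beta_{(m_1,\ldots,m_k)} \;=\; \tilde\beta_m^{N},
\]
which gives exactly the Pascal mixture claimed. (As an equivalent route one could set $z_1=\cdots=z_k=z$ in the joint PGF displayed in the proof of Proposition \ref{change of scale parameter}; the inner factor $\tilde Q$ then collapses to a univariate generating function whose coefficients are precisely the $\tilde\beta_m^{N}$, and the outer factor $1/(1-\theta(z-1))$ is raised to the power $m$, yielding the same univariate Pascal mixture.)

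There is no serious obstacle: the only point requiring care is the independence structure in Lemma \ref{Stochastic repre}, namely that the $G_{ij}$ are independent \emph{across} $i$ as well as within each $i$, so that the pooled collection $\{G_{ij}\}$ is a single i.i.d.\ geometric sample. Once that is noted, the identification of $N\mid M$ as Pascal$(M,\theta)$ is immediate and the mixing-weight formula follows by marginalizing.
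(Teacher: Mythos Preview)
Your proof is correct and follows essentially the same route as the paper: both invoke Lemma \ref{Stochastic repre} to write $N=\sum_{i=1}^{k}\sum_{j=1}^{M_i}G_{ij}$ as a compound geometric sum with count $M=M_1+\cdots+M_k$, and then read off the mixing weights as $P(M=m)=\sum_{m_1+\cdots+m_k=m}\tilde\beta_{(m_1,\ldots,m_k)}$. Your version is simply more explicit about the conditioning step and the independence of the $G_{ij}$ across $i$.
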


\begin{proof}
  By Lemma \ref{Stochastic repre}, $N=N_1+\cdots+N_k=\sum_{i=1}^{M_1+\cdots+M_k}G_i$. As a result, $N$ is a univariate Pascal mixture whose mixing weight for the $i$-th component equals
\begin{align*}
  \tilde{\beta}_m^N=&P(M_1+\cdots+M_k=m)\\
=&\sum_{m_1+\cdots+m_k=m}P(M_1=m_1,\cdots,M_k=m_k)\\
=&\sum_{m_1+\cdots+m_k=m}\tilde{\beta}_{(m_1,\cdots,m_k)}.\qedhere
\end{align*}
\end{proof}

\begin{Theorem}
For any $0<\theta^r<\min\{\theta_1^r,\ldots,\theta_k^r\}$, the total number of reported claims up
 to the valuation date $\tau$ is a univariate Pascal mixture with probability function
\begin{equation*}
  P(N^{r}=n)=\sum_{m=1}^{\infty}\left(\sum_{m_1+\cdots+m_k=m}\tilde{\beta}_{(m_1,\cdots,m_k)}\right)p(n;m,\theta^{r}),
\end{equation*}
where
\begin{equation*}
  \tilde{\beta}_{(m_1,\cdots,m_k)}=\sum_{n_1=1}^{m_1}\cdots\sum_{n_k=1}^{m_k}\beta_{(n_1,\cdots,n_k)}\prod_{j=1}^k\dbinom{m_j-1}{n_j-1}
\left(\frac{\theta^{r}}{\theta_j^{r}}\right)^{n_j}\left(1-\frac{\theta^{r}}{\theta_j^{r}}\right)^{m_j-n_j}. \qedhere
\end{equation*}
Similarly,
for any $0<\theta^{IBNR}<\min\{\theta_1^{IBNR},\cdots,\theta_k^{IBNR}\}$, the total number of IBNR claims up to the
valuation date $\tau$ is an univariate Pascal mixture, with probability function
\begin{equation*}
  P(N^{IBNR}=n)=\sum_{m=1}^{\infty}\left(\sum_{m_1+\cdots+m_k=m}\tilde{\beta}_{(m_1,\cdots,m_k)}\right)p(n;m,\theta^{IBNR}),
\end{equation*}
where
\begin{equation*}
  \tilde{\beta}_{(m_1,\cdots,m_k)}=\sum_{n_1=1}^{m_1}\cdots\sum_{n_k=1}^{m_k}\beta_{(n_1,\cdots,n_k)}\prod_{j=1}^k\dbinom{m_j-1}{n_j-1}
\left(\frac{\theta^{IBNR}}{\theta_j^{IBNR}}\right)^{n_j}\left(1-\frac{\theta^{IBNR}}{\theta_j^{IBNR}}\right)^{m_j-n_j}. \qedhere
\end{equation*}
\label{closed-form for IBNR claim}
\end{Theorem}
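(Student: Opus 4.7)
The plan is to obtain Theorem \ref{closed-form for IBNR claim} as a direct consequence of the sequence of results that have already been established in Sections 3 and 6. Since the joint distributions of $(N_1^r,\ldots,N_k^r)$ and $(N_1^{IBNR},\ldots,N_k^{IBNR})$ are already identified as multivariate Pascal mixtures with dimension-varying scale parameters $\theta_j^r$ and $\theta_j^{IBNR}$ via \eqref{jp_reported} and \eqref{jp_IBNR}, the remaining task is purely one of repackaging: first bring those distributions into the common-scale form of \eqref{multiPascalcommontheta}, and then invoke the closure property for sums given in Proposition \ref{multivariate to univariate}.

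Concretely, I would proceed in three steps. First, fix any $\theta^r$ with $0<\theta^r<\min\{\theta_1^r,\ldots,\theta_k^r\}$ and apply Proposition \ref{change of scale parameter} to the joint distribution of $(N_1^r,\ldots,N_k^r)$. This rewrites the mixture using the single scale parameter $\theta^r$ and produces the new mixing weights
\begin{equation*}
\tilde{\beta}_{(m_1,\cdots,m_k)}=\sum_{n_1=1}^{m_1}\cdots\sum_{n_k=1}^{m_k}\beta_{(n_1,\cdots,n_k)}\prod_{j=1}^k\binom{m_j-1}{n_j-1}\left(\frac{\theta^{r}}{\theta_j^{r}}\right)^{n_j}\left(1-\frac{\theta^{r}}{\theta_j^{r}}\right)^{m_j-n_j},
\end{equation*}
which are exactly the coefficients appearing in the statement of the theorem. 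Second, apply Proposition \ref{multivariate to univariate} to this common-scale representation: since $N^r=N_1^r+\cdots+N_k^r$, the stochastic representation of Lemma \ref{Stochastic repre} collapses the sum into $\sum_{i=1}^{M_1+\cdots+M_k}G_i$, where the $G_i$'s are i.i.d.\ geometric with mean $\theta^r$, and hence $N^r$ is a univariate Pascal mixture with the same scale parameter $\theta^r$ and mixing weights obtained by aggregating $\tilde{\beta}_{(m_1,\cdots,m_k)}$ over all compositions $m_1+\cdots+m_k=m$. Third, repeat the identical argument verbatim with $\theta_j^{IBNR}$ in place of $\theta_j^r$ and $\theta^{IBNR}$ in place of $\theta^r$ to obtain the IBNR statement.

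There is essentially no hard step here: the theorem is a clean corollary of the chain of results \eqref{jp_reported}--\eqref{jp_IBNR} $\Rightarrow$ Proposition \ref{change of scale parameter} $\Rightarrow$ Lemma \ref{Stochastic repre} $\Rightarrow$ Proposition \ref{multivariate to univariate}. The only place one must be careful is the admissibility of the common scale parameter: we need $\theta^r$ (resp.\ $\theta^{IBNR}$) to be strictly less than every $\theta_j^r$ (resp.\ $\theta_j^{IBNR}$) so that the probabilities $\theta^r/\theta_j^r$ lie in $(0,1]$ and the binomial thinning weights in Proposition \ref{change of scale parameter} are genuine probabilities; this is exactly the hypothesis stated in the theorem. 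Once that is noted, the conclusion follows by direct substitution with no further computation.
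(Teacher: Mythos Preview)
Your proposal is correct and follows exactly the same route as the paper's own proof, which simply cites Propositions \ref{change of scale parameter} and \ref{multivariate to univariate}. Your write-up just spells out in more detail the two-step mechanism (common-scale rewriting, then aggregation via the compound-geometric representation) that the paper compresses into a one-line reference.
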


\begin{proof}
   It is obvious by Propositions \ref{change of scale parameter} and \ref{multivariate to univariate}.
\end{proof}

Although Theorem \ref{closed-form for IBNR claim} shows that theoretically it has a closed-form expression under our model assumptions, there may be some computational issues. When one uses Proposition \ref{change of scale parameter} to unify the different scale parameters over the time intervals, the resulting multivariate Pascal mixture will have infinite number of terms. There is no guarantee that a truncation of this infinite series can give an adequate approximation. More importantly, the truncated terms might play an important role in deciding the tail shape of the predictive distribution. Due to these considerations, one might resort to the simulation technique when calculating the distribution in practice.

\section{\large Concluding remarks}\label{conclusion}

In this paper, we propose a marked Cox model for a portfolio claim arrival process along with its reporting delays. The model can take into consideration the exposure fluctuations and has a great versatility in modeling temporal dependence.
The model is mathematically tractable. We show that the associated reported claim process and IBNR claim process are also marked Cox processes with easily convertible intensity functions and marking distributions.
The model allows to produce an equivalent discretely observed process from the claim arrival process, the reported claim process and the IBNR process, and their joint distributions respectively. Furthermore, 
closed-form expressions are available for the ACF of the discretely observed processes. These properties are critically important from a data fitting and prediction perspective.

In \citet{Badescu2015b}, we will present an algorithm to fit the proposed model to data and to estimate all the model parameters
including the number of states and the transition probabilities of the Markov chain. The efficiency of the fitting algorithm and the versatility of the proposed model are illustrated through detailed simulation studies. The usefulness of the proposed model is also tested by applying it to a real insurance data set. We compare the predictive distribution of our model with
the over-dispersed Poisson model (ODP), one of the several stochastic models that underpin
the widely used chain-ladder method. The results show that our model can yield more accurate best estimates and a more realistic predictive distribution.

Our current work opens several potential research directions. One could introduce the time trend and the seasonal effect into the claim arrival process by incorporating time covariates either in the state-dependent distributions or in the transition probability matrix.
While we only include the reporting delay as the single marker in the marked Cox model, the model can easily be extended to the situation that has multiple markers, e.g., the multiple payments of a reported claim. They
can be potentially modeled as recurrent events or one can generalize the  model to be a marked Cox cluster process.
The temporal dependence structure in the proposed model can also be enhanced in at least two directions. One could replace the underlying Markov chain structure in the current model with a Markov process to make it a full generalization of the Markov-Modulated Poisson Process (MMPP). Another path is to impose a more complex temporal dependence structure, e.g., at both the latent process level and the observation level. This would be related to self-excited processes such as the Hawkes process.

\vskip 2 true cm

\noindent {\bf Acknowledgments}

\noindent
Andrei L. Badescu and X. Sheldon Lin are supported by grants from 
the Natural Sciences and Engineering Research Council of Canada (NSERC). The authors would like to thank Professor Ragnar Norberg and Professor Mario V. Wuthrich for sending their unpublished manuscripts.\\

\end{document}